\theoremstyle{plain}
\newtheorem{proposition}{Proposition}
\newtheorem{theorem}{Theorem}
\newtheorem{corollary}{Corollary}
\theoremstyle{definition}
\newtheorem{definition}{Definition}
\newtheorem{assumption}{Assumption}
\theoremstyle{remark}
\newtheorem{remark}{Remark}
\newtheorem{example}{Example}
\begin{document}

\title{\Large\bf Zero-Gradient-Sum Algorithms for Distributed Convex Optimization: The Continuous-Time Case\footnote{This work was supported by the National Science Foundation under grant CMMI-0900806.}}
\author{Jie Lu and Choon Yik Tang\\ School of Electrical and Computer Engineering\\ University of Oklahoma, Norman, OK 73019, USA\\ {\sf\{jie.lu-1,cytang\}@ou.edu}}
\date{\today}
\maketitle

\begin{abstract}
This paper presents a set of continuous-time distributed algorithms that solve unconstrained, separable, convex optimization problems over undirected networks with fixed topologies. The algorithms are developed using a Lyapunov function candidate that exploits convexity, and are called {\em Zero-Gradient-Sum} (ZGS) algorithms as they yield nonlinear networked dynamical systems that evolve invariantly on a zero-gradient-sum manifold and converge asymptotically to the unknown optimizer. We also describe a systematic way to construct ZGS algorithms, show that a subset of them actually converge exponentially, and obtain lower and upper bounds on their convergence rates in terms of the network topologies, problem characteristics, and algorithm parameters, including the algebraic connectivity, Laplacian spectral radius, and function curvatures. The findings of this paper may be regarded as a natural generalization of several well-known algorithms and results for distributed consensus, to distributed convex optimization.
\end{abstract}

\section{Introduction}\label{sec:intro}

This paper addresses the problem of solving an unconstrained, separable, convex optimization problem over an $N$-node multi-hop network, where each node $i$ observes a convex function $f_i$, and all the $N$ nodes wish to determine an optimizer $x^*$ that minimizes the sum of the $f_i$'s, i.e.,
\begin{align}
x^*\in\operatornamewithlimits{arg\,min}_x\sum_{i=1}^Nf_i(x).\label{eq:x*=argminsumf}
\end{align}
The problem \eqref{eq:x*=argminsumf} arises in many emerging and future applications of multi-agent systems and wired/\linebreak[0]wireless/\linebreak[0]social networks, where agents or nodes often need to collaborate in order to jointly accomplish sophisticated tasks in decentralized and optimal fashions \cite{Rabbat04}.

To date, a family of discrete-time {\em subgradient algorithms}, aimed at solving problem \eqref{eq:x*=argminsumf} under general convexity assumptions, have been reported in the literature. These subgradient algorithms may be roughly classified into two groups. The first group of algorithms \cite{Rabbat04, Blatt07, Johansson07, Ram09} are {\em incremental} in nature, relying on the passing of an estimate of $x^*$ around the network to operate. The second group of algorithms \cite{Johansson08, Nedic09, Ram10} are {\em non-incremental}, relying instead on a combination of subgradient updates and linear consensus iterations to operate, although gossip-based updates have also been considered \cite{Ram09b}. For each of these algorithms, a number of convergence properties have been established, including the resulting error bounds, asymptotic convergence, and convergence rates.

In \cite{LuJ11}, we introduced two gossip-style, distributed asynchronous algorithms, referred to as {\em Pairwise Equalizing} (PE) and {\em Pairwise Bisectioning} (PB), which solve the scalar version of problem \eqref{eq:x*=argminsumf}, in a manner that is fundamentally different from the aforementioned subgradient algorithms (e.g., PE and PB do not try to move along the gradient, nor do they require the notion of a stepsize). In \cite{LuJ10b}, we showed that the two basic ideas behind PE---namely, the conservation of a certain gradient sum at zero and the use of a convexity-inspired Lyapunov function---can be extended, leading to {\em Controlled Hopwise Equalizing} (CHE), a distributed asynchronous algorithm that allows individual nodes to use potential drops in the value of the Lyapunov function to control, on their own, when to initiate an iteration, so that problem \eqref{eq:x*=argminsumf} may be solved efficiently. In both the papers \cite{LuJ11, LuJ10b}, problem \eqref{eq:x*=argminsumf} was studied in a discrete-time, asynchronous setting, and only the scalar version of it was considered.

In this paper, we address problem \eqref{eq:x*=argminsumf} from a continuous-time and multi-dimensional standpoint, building upon the two basic ideas behind PE. Specifically, assuming that each $f_i$ in \eqref{eq:x*=argminsumf} is twice continuously differentiable and strongly convex and using the same Lyapunov function candidate as the one for PE and CHE, we first derive a family of continuous-time distributed algorithms called {\em Zero-Gradient-Sum} (ZGS) algorithms, with which the states of the resulting nonlinear networked dynamical systems slide along an invariant, zero-gradient-sum manifold and converge asymptotically to the unknown minimizer $x^*$ in \eqref{eq:x*=argminsumf}. We then describe a systematic way to construct ZGS algorithms and prove that a subset of them are exponentially convergent. For this subset of algorithms, we also obtain lower and upper bounds on their convergence rates as functions of the network topologies, problem characteristics, and algorithm parameters, including the algebraic connectivity, Laplacian spectral radius, and curvatures of the $f_i$'s. As another contribution of this paper, we show that some of the existing continuous-time distributed consensus algorithms (e.g., \cite{Olfati-Saber03, Olfati-Saber04, Hatano05, RenW05, Spanos05c, Tahbaz-Salehi07b}) are special cases of ZGS algorithms and are, interestingly, just a slight modification away from solving any problem of the form \eqref{eq:x*=argminsumf}. In addition, the well-known result from \cite{Olfati-Saber04}, which says that the convergence rate of a linear consensus algorithm is characterized by the algebraic connectivity of the underlying graph, is a special case of Theorem~\ref{thm:ZGSexpconvlb} here.

\section{Preliminaries}\label{sec:prel}

A twice continuously differentiable function $f:\mathbb{R}^n\rightarrow\mathbb{R}$ is {\em locally strongly convex} if for any convex and compact set $D\subset\mathbb{R}^n$, there exists a constant $\theta>0$ such that the following equivalent conditions hold \cite{Goebel08, Nesterov04}:
\begin{align}
f(y)-f(x)-\nabla f(x)^T(y-x)&\ge\frac{\theta}{2}\|y-x\|^2,\quad\forall x,y\in D,\label{eq:fyfxnfxyx>=t2||yx||2}\displaybreak[0]\\
(\nabla f(y)-\nabla f(x))^T(y-x)&\ge\theta\|y-x\|^2,\quad\forall x,y\in D,\label{eq:nfynfxyx>=t||yx||2}\displaybreak[0]\\
\nabla^2f(x)&\ge\theta I_n,\quad\forall x\in D,\label{eq:n2fx>=tI}
\end{align}
where $\|\cdot\|$ denotes the Euclidean norm, $\nabla f:\mathbb{R}^n\rightarrow\mathbb{R}^n$ is the gradient of $f$, $\nabla^2f:\mathbb{R}^n\rightarrow\mathbb{R}^{n\times n}$ is the Hessian of $f$, $I_n\in\mathbb{R}^{n\times n}$ is the identity matrix, and $\ge$ denotes matrix inequality (i.e., $A\ge B$ means $A-B$ is a positive semidefinite matrix). The function $f$ is {\em strongly convex} if there exists a constant $\theta>0$ such that the equivalent conditions \eqref{eq:fyfxnfxyx>=t2||yx||2}--\eqref{eq:n2fx>=tI} hold for $D=\mathbb{R}^n$, in which case $\theta$ is called the {\em convexity parameter} of $f$ \cite{Nesterov04}. Finally, for any twice continuously differentiable function $f:\mathbb{R}^n\rightarrow\mathbb{R}$, any convex set $D\subset\mathbb{R}^n$, and any constant $\Theta>0$, the following conditions are equivalent \cite{Nesterov04, Boyd04}:
\begin{align}
f(y)-f(x)-\nabla f(x)^T(y-x)&\le\frac{\Theta}{2}\|y-x\|^2,\quad\forall x,y\in D,\label{eq:fyfxnfxyx<=T2||yx||2}\displaybreak[0]\\
(\nabla f(y)-\nabla f(x))^T(y-x)&\le\Theta\|y-x\|^2,\quad\forall x,y\in D,\label{eq:nfynfxyx<=T||yx||2}\displaybreak[0]\\
\nabla^2f(x)&\le\Theta I_n,\quad\forall x\in D.\label{eq:n2fx<=TI}
\end{align}

\section{Problem Formulation}\label{sec:probform}

Consider a multi-hop network consisting of $N\ge2$ nodes, connected by bidirectional links in a fixed topology. The network is modeled as a connected, undirected graph $\mathcal{G}=(\mathcal{V},\mathcal{E})$, where $\mathcal{V}=\{1,2,\ldots,N\}$ represents the set of $N$ nodes and $\mathcal{E}\subset\{\{i,j\}:i,j\in\mathcal{V},i\neq j\}$ represents the set of links. Any two nodes $i,j\in\mathcal{V}$ are one-hop neighbors and can communicate if and only if $\{i,j\}\in\mathcal{E}$. The set of one-hop neighbors of each node $i\in\mathcal{V}$ is denoted as $\mathcal{N}_i=\{j\in\mathcal{V}:\{i,j\}\in\mathcal{E}\}$, and the communications are assumed to be delay- and error-free, with no quantization.

Suppose each node $i\in\mathcal{V}$ observes a function $f_i:\mathbb{R}^n\rightarrow\mathbb{R}$ satisfying the following assumption:

\begin{assumption}\label{asm:fi}
For each $i\in\mathcal{V}$, the function $f_i$ is twice continuously differentiable, strongly convex with convexity parameter $\theta_i>0$, and has a locally Lipschitz Hessian $\nabla^2f_i$.
\end{assumption}

Suppose, upon observing the $f_i$'s, all the $N$ nodes wish to solve the following unconstrained, separable, convex optimization problem:
\begin{align}
\min_{x\in\mathbb{R}^n}F(x),\label{eq:minF}
\end{align}
where the objective function $F:\mathbb{R}^n\rightarrow\mathbb{R}$ is defined as $F(x)=\sum_{i\in\mathcal{V}}f_i(x)$. The proposition below shows that $F$ has a unique minimizer $x^*\in\mathbb{R}^n$, so that problem \eqref{eq:minF} is well-posed:

\begin{proposition}\label{pro:exisuniqx*}
With Assumption~\ref{asm:fi}, there exists a unique $x^*\in\mathbb{R}^n$ such that $F(x^*)\le F(x)$ $\forall x\in\mathbb{R}^n$ and $\nabla F(x^*)=0$.
\end{proposition}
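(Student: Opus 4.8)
The plan is to first observe that the aggregate objective $F$ inherits strong convexity from the individual $f_i$'s, then use this to deduce coercivity and hence existence of a minimizer via the Weierstrass theorem, and finally invoke strict convexity together with the first-order optimality condition to obtain uniqueness and $\nabla F(x^*)=0$.

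First I would note that, since each $f_i$ is twice continuously differentiable and strongly convex with convexity parameter $\theta_i>0$, condition \eqref{eq:n2fx>=tI} gives $\nabla^2 f_i(x)\ge\theta_i I_n$ for all $x\in\mathbb{R}^n$. Summing over $i\in\mathcal{V}$ yields $\nabla^2 F(x)=\sum_{i\in\mathcal{V}}\nabla^2 f_i(x)\ge\theta I_n$ for all $x\in\mathbb{R}^n$, where $\theta:=\sum_{i\in\mathcal{V}}\theta_i>0$. Hence $F$ is itself twice continuously differentiable and strongly convex with convexity parameter $\theta$, so that the equivalent characterizations \eqref{eq:fyfxnfxyx>=t2||yx||2} and \eqref{eq:nfynfxyx>=t||yx||2} hold with $f=F$ and $D=\mathbb{R}^n$.

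Next, for existence, I would fix an arbitrary $x_0\in\mathbb{R}^n$ and apply \eqref{eq:fyfxnfxyx>=t2||yx||2} with $x=x_0$ to get $F(y)\ge F(x_0)+\nabla F(x_0)^T(y-x_0)+\frac{\theta}{2}\|y-x_0\|^2$ for all $y\in\mathbb{R}^n$. Since the right-hand side tends to $+\infty$ as $\|y\|\to\infty$ (the quadratic term dominates the affine one), $F$ is radially unbounded, so the sublevel set $S=\{x\in\mathbb{R}^n:F(x)\le F(x_0)\}$ is nonempty, closed (by continuity of $F$), and bounded, hence compact. The continuous function $F$ therefore attains a minimum over $S$ at some $x^*\in S$, and because $F(x)>F(x_0)\ge F(x^*)$ for every $x\notin S$, this $x^*$ is in fact a global minimizer of $F$ on $\mathbb{R}^n$.

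Finally, since $x^*$ is an unconstrained local minimizer of the differentiable function $F$, the first-order necessary condition gives $\nabla F(x^*)=0$. For uniqueness, suppose $\tilde{x}\in\mathbb{R}^n$ also satisfies $F(\tilde{x})\le F(x)$ for all $x$; then $\nabla F(\tilde{x})=0$ as well, and applying \eqref{eq:nfynfxyx>=t||yx||2} to $F$ yields $0=(\nabla F(\tilde{x})-\nabla F(x^*))^T(\tilde{x}-x^*)\ge\theta\|\tilde{x}-x^*\|^2$, which forces $\tilde{x}=x^*$. The only step requiring any care is the passage from strong convexity of the individual $f_i$'s to coercivity of $F$; the remainder is a routine appeal to the Weierstrass theorem and first-order optimality.
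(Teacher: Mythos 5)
Your proof is correct, but it takes a different route from the paper only in the trivial sense that the paper offers no argument at all: its entire proof is a citation to Theorem~6 of the reference \cite{Bazaraa73} (a standard existence--uniqueness result for minimizers of strongly convex functions). What you have written is essentially a self-contained derivation of that cited theorem in the special case at hand: you correctly observe that $\nabla^2F(x)=\sum_{i\in\mathcal{V}}\nabla^2f_i(x)\ge\bigl(\sum_{i\in\mathcal{V}}\theta_i\bigr)I_n$, so $F$ inherits strong convexity; you get coercivity from \eqref{eq:fyfxnfxyx>=t2||yx||2}, existence via compactness of a sublevel set and Weierstrass, $\nabla F(x^*)=0$ from first-order optimality, and uniqueness from the monotonicity condition \eqref{eq:nfynfxyx>=t||yx||2}. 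Each step is sound, and the argument has the virtue of making the paper's appeal to an external theorem unnecessary. Two small remarks: your local symbol $\theta=\sum_{i\in\mathcal{V}}\theta_i$ clashes with the paper's later convention $\theta=\min_{i\in\mathcal{V}}\theta_i$ in Section~\ref{sec:convrateanal}, so you should rename it if this proof were inserted into the paper; and your introductory plan mentions ``strict convexity'' for uniqueness while the body actually (and correctly) uses gradient monotonicity---either works, but the text should match.
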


\begin{proof}
See Theorem~6 in \cite{Bazaraa73}.
\end{proof}

Given the above network and problem, the aim of this paper is to devise a continuous-time distributed algorithm of the form
\begin{align}
\dot{x}_i(t)&=\varphi_i(x_i(t),\mathbf{x}_{\mathcal{N}_i}(t);f_i,\mathbf{f}_{\mathcal{N}_i}),\quad\forall t\ge0,\;\forall i\in\mathcal{V},\label{eq:xd=pxxff}\displaybreak[0]\\
x_i(0)&=\chi_i(f_i,\mathbf{f}_{\mathcal{N}_i}),\quad\forall i\in\mathcal{V},\label{eq:x0=cff}
\end{align}
where $t\ge0$ denotes time; $x_i(t)\in\mathbb{R}^n$ is a state representing node $i$'s estimate of the unknown minimizer $x^*$ at time $t$; $\mathbf{x}_{\mathcal{N}_i}(t)=(x_j(t))_{j\in\mathcal{N}_i}\in\mathbb{R}^{n|\mathcal{N}_i|}$ is a vector obtained by stacking $x_j(t)$ $\forall j\in\mathcal{N}_i$; $\mathbf{f}_{\mathcal{N}_i}=(f_j)_{j\in\mathcal{N}_i}:\mathbb{R}^n\rightarrow\mathbb{R}^{|\mathcal{N}_i|}$ is a function obtained by stacking $f_j$ $\forall j\in\mathcal{N}_i$; $\varphi_i:\mathbb{R}^n\times\mathbb{R}^{n|\mathcal{N}_i|}\rightarrow\mathbb{R}^n$ is a locally Lipschitz function of $x_i(t)$ and $\mathbf{x}_{\mathcal{N}_i}(t)$ governing the dynamics of $x_i(t)$, whose definition may depend on $f_i$ and $\mathbf{f}_{\mathcal{N}_i}$; $\chi_i\in\mathbb{R}^n$ is a constant determining the initial state $x_i(0)$, whose value may depend on $f_i$ and $\mathbf{f}_{\mathcal{N}_i}$; $|\cdot|$ denotes the cardinality of a set; and $x_i(t)$, $f_i$, $\varphi_i$, and $\chi_i$ are maintained in node $i$'s local memory. The goal of the algorithm \eqref{eq:xd=pxxff} and \eqref{eq:x0=cff} is to steer all the estimates $x_i(t)$'s asymptotically (or, better yet, exponentially) to the unknown $x^*$, i.e.,
\begin{align}
\lim_{t\rightarrow\infty}x_i(t)=x^*,\quad\forall i\in\mathcal{V},\label{eq:limx=x*}
\end{align}
enabling all the nodes to cooperatively solve problem \eqref{eq:minF}. Note that to realize \eqref{eq:xd=pxxff} and \eqref{eq:x0=cff}, for each $i\in\mathcal{V}$, every node $j\in\mathcal{N}_i$ must send node $i$ its $x_j(t)$ at each time $t$ if $\varphi_i$ does depend on $x_j(t)$, and its $f_j$ at time $t=0$ if $\varphi_i$ or $\chi_i$ does depend on $f_j$.

\begin{remark}\label{rem:pcdep}
As it turns out and will be shown in Section~\ref{sec:zerogradsumalgo}, each $\varphi_i$ and $\chi_i$ in \eqref{eq:xd=pxxff} and \eqref{eq:x0=cff} do not have to depend on $\mathbf{f}_{\mathcal{N}_i}$, so that the nodes do not have to exchange their $f_i$'s. We note that the algorithm PB in \cite{LuJ11} also exhibits this feature, but the algorithms PE in \cite{LuJ11} and CHE in \cite{LuJ10b} do not.
\end{remark}

\section{Zero-Gradient-Sum Algorithms}\label{sec:zerogradsumalgo}

In this section, we develop a family of algorithms that achieve the stated goal. To facilitate the development, we let $\mathbf{x}^*=(x^*,x^*,\ldots,x^*)\in\mathbb{R}^{nN}$ and $\mathbf{x}=(x_1,x_2,\ldots,x_N)\in\mathbb{R}^{nN}$ denote the minimizer and state vectors, respectively, and write the latter as $\mathbf{x}(t)=(x_1(t),x_2(t),\ldots,x_N(t))$ when we wish to emphasize time or view it as a state trajectory.

Consider a Lyapunov function candidate $V:\mathbb{R}^{nN}\rightarrow\mathbb{R}$, defined in terms of the observed $f_i$'s as
\begin{align}
V(\mathbf{x})=\sum_{i\in\mathcal{V}}f_i(x^*)-f_i(x_i)-\nabla f_i(x_i)^T(x^*-x_i).\label{eq:V=sumfx*fxnfxx*x}
\end{align}
Notice that $V$ in \eqref{eq:V=sumfx*fxnfxx*x} is continuously differentiable because of Assumption~\ref{asm:fi}, and that it satisfies $V(\mathbf{x}^*)=0$. Moreover, $V$ is positive definite with respect to $\mathbf{x}^*$ and is radially unbounded, which can be seen by noting that Assumption~\ref{asm:fi} and the first-order strong convexity condition \eqref{eq:fyfxnfxyx>=t2||yx||2} imply
\begin{align}
V(\mathbf{x})\ge\sum_{i\in\mathcal{V}}\frac{\theta_i}{2}\|x^*-x_i\|^2,\quad\forall\mathbf{x}\in\mathbb{R}^{nN},\label{eq:V>=sumt2||x*x||2}
\end{align}
and \eqref{eq:V>=sumt2||x*x||2} in turn implies $V(\mathbf{x})>0$ $\forall\mathbf{x}\ne\mathbf{x}^*$ and $V(\mathbf{x})\rightarrow\infty$ as $\|\mathbf{x}\|\rightarrow\infty$. Therefore, $V$ in \eqref{eq:V=sumfx*fxnfxx*x} is a legitimate Lyapunov function candidate, which may be used to derive algorithms that ensure \eqref{eq:limx=x*}.

Taking the time derivative of $V$ along the state trajectory $\mathbf{x}(t)$ of the system \eqref{eq:xd=pxxff} and calling it $\dot{V}:\mathbb{R}^{nN}\rightarrow\mathbb{R}$, we obtain
\begin{align}
\dot{V}(\mathbf{x}(t))=\sum_{i\in\mathcal{V}}(x_i(t)-x^*)^T\nabla^2f_i(x_i(t))\varphi_i(x_i(t),\mathbf{x}_{\mathcal{N}_i}(t);f_i,\mathbf{f}_{\mathcal{N}_i}),\quad\forall t\ge0.\label{eq:Vd=sumxx*n2fxpxxff}
\end{align}
Due to Assumption~\ref{asm:fi} and to each $\varphi_i$ being locally Lipschitz, $\dot{V}$ in \eqref{eq:Vd=sumxx*n2fxpxxff} is continuous. In addition, it yields $\dot{V}(\mathbf{x}^*)=0$. Hence, if the functions $\varphi_i$ $\forall i\in\mathcal{V}$ are such that $\dot{V}$ is negative definite with respect to $\mathbf{x}^*$, i.e.,
\begin{align}
\sum_{i\in\mathcal{V}}(x_i-x^*)^T\nabla^2f_i(x_i)\varphi_i(x_i,\mathbf{x}_{\mathcal{N}_i};f_i,\mathbf{f}_{\mathcal{N}_i})<0,\quad\forall\mathbf{x}\ne\mathbf{x}^*,\label{eq:sumxx*n2fxpxxff<0}
\end{align}
the system \eqref{eq:xd=pxxff} would have a unique equilibrium point at $\mathbf{x}^*$, which by the Barbashin-Krasovskii theorem \cite{Khalil02} would be globally asymptotically stable. Consequently, regardless of how the constants $\chi_i$ $\forall i\in\mathcal{V}$ in \eqref{eq:x0=cff} are chosen, the goal \eqref{eq:limx=x*} would be accomplished.

As it follows from the above, the challenge lies in finding $\varphi_i$ $\forall i\in\mathcal{V}$, which collectively satisfy \eqref{eq:sumxx*n2fxpxxff<0}. Such $\varphi_i$'s, however, may be difficult to construct because $x^*$ in \eqref{eq:sumxx*n2fxpxxff<0} is unknown to any of the nodes, i.e., $x^*$ depends on {\em every} $f_i$ via \eqref{eq:minF}, but $\varphi_i$ maintained by each node $i\in\mathcal{V}$ can {\em only} depend on $f_i$ and $\mathbf{f}_{\mathcal{N}_i}$. As a result, one cannot let the $\varphi_i$'s depend on $x^*$, such as letting $\varphi_i(x_i,\mathbf{x}_{\mathcal{N}_i};f_i,\mathbf{f}_{\mathcal{N}_i})=x^*-x_i$ $\forall i\in\mathcal{V}$, even though this particular choice guarantees \eqref{eq:sumxx*n2fxpxxff<0} (since each $\nabla^2f_i(x_i)$ is positive definite, by \eqref{eq:n2fx>=tI}). Given that the required $\varphi_i$'s are not readily apparent, instead of searching for them, below we present an alternative approach toward the goal \eqref{eq:limx=x*}, which uses the same $V$ and $\dot{V}$ as in \eqref{eq:V=sumfx*fxnfxx*x} and \eqref{eq:Vd=sumxx*n2fxpxxff}, but demands neither local nor global asymptotic stability.

To state the approach, we first introduce two definitions: let $\mathcal{A}\subset\mathbb{R}^{nN}$ represent the {\em agreement set} and $\mathcal{M}\subset\mathbb{R}^{nN}$ represent the {\em zero-gradient-sum manifold}, defined respectively as
\begin{align}
\mathcal{A}&=\{(y_1,y_2,\ldots,y_N)\in\mathbb{R}^{nN}:y_1=y_2=\cdots=y_N\},\label{eq:A=yinRnNy=y=y}\displaybreak[0]\\
\mathcal{M}&=\{(y_1,y_2,\ldots,y_N)\in\mathbb{R}^{nN}:\sum_{i\in\mathcal{V}}\nabla f_i(y_i)=0\},\label{eq:M=yinRnNsumnfy=0}
\end{align}
so that $\mathbf{x}\in\mathcal{A}$ if and only if all the $x_i$'s agree, and $\mathbf{x}\in\mathcal{M}$ if and only if the sum of all the gradients $\nabla f_i$'s, evaluated respectively at the $x_i$'s, is zero. Notice from \eqref{eq:A=yinRnNy=y=y} that $\mathbf{x}^*\in\mathcal{A}$, from \eqref{eq:M=yinRnNsumnfy=0} and Proposition~\ref{pro:exisuniqx*} that $\mathbf{x}^*\in\mathcal{M}$, and from all of them that $\mathbf{x}\in\mathcal{A}\cap\mathcal{M}\Rightarrow\mathbf{x}=\mathbf{x}^*$. Thus, $\mathcal{A}\cap\mathcal{M}=\{\mathbf{x}^*\}$. Also note from the continuity of each $\nabla f_i$ that $\mathcal{M}$ is closed and from the Implicit Function Theorem and the nonsingularity of each $\nabla^2f_i(x)$ $\forall x\in\mathbb{R}^n$ that $\mathcal{M}$ is indeed a manifold of dimension $n(N-1)$.

Having introduced $\mathcal{A}$ and $\mathcal{M}$, we now describe the approach, which is based on the following recognition: to attain the goal \eqref{eq:limx=x*}, condition \eqref{eq:sumxx*n2fxpxxff<0}---which ensures that {\em every} trajectory $\mathbf{x}(t)$ goes to $\mathbf{x}^*$---is sufficient but not necessary. Rather, all that is needed is a {\em single} trajectory $\mathbf{x}(t)$, along which $\dot{V}(\mathbf{x}(t))\le0$ $\forall t\ge0$ and $\lim_{t\rightarrow\infty}V(\mathbf{x}(t))=0$, since the latter implies \eqref{eq:limx=x*}. Recognizing this, we next derive three conditions on the $\varphi_i$'s and $\chi_i$'s in \eqref{eq:xd=pxxff} and \eqref{eq:x0=cff} that produce such a trajectory. Assume, for a moment, that the $\chi_i$'s dictating the initial state $\mathbf{x}(0)$ have been decided, so that we may focus on the $\varphi_i$'s that shape the trajectory $\mathbf{x}(t)$ leaving $\mathbf{x}(0)$. Observe that $\dot{V}$ in \eqref{eq:Vd=sumxx*n2fxpxxff} takes the form $\dot{V}(\mathbf{x}(t))=\Phi_1(\mathbf{x}(t))-x^{*T}\Phi_2(\mathbf{x}(t))$ $\forall t\ge0$, where $\Phi_1:\mathbb{R}^{nN}\rightarrow\mathbb{R}$ and $\Phi_2:\mathbb{R}^{nN}\rightarrow\mathbb{R}^n$. Thus, the unknown $x^*$---which may undesirably affect the sign of $\dot{V}(\mathbf{x}(t))$---can be eliminated by setting $\Phi_2(\mathbf{x})=0$ $\forall\mathbf{x}\in\mathbb{R}^{nN}$, i.e., by forcing the $\varphi_i$'s to satisfy
\begin{align}
\sum_{i\in\mathcal{V}}\nabla^2f_i(x_i)\varphi_i(x_i,\mathbf{x}_{\mathcal{N}_i};f_i,\mathbf{f}_{\mathcal{N}_i})=0,\quad\forall\mathbf{x}\in\mathbb{R}^{nN}.\label{eq:sumn2fxpxxff=0}
\end{align}
With this first condition \eqref{eq:sumn2fxpxxff=0}, $\dot{V}$ becomes free of $x^*$, reducing to
\begin{align}
\dot{V}(\mathbf{x}(t))=\sum_{i\in\mathcal{V}}x_i(t)^T\nabla^2f_i(x_i(t))\varphi_i(x_i(t),\mathbf{x}_{\mathcal{N}_i}(t);f_i,\mathbf{f}_{\mathcal{N}_i}),\quad\forall t\ge0.\label{eq:Vd=sumxn2fxpxxff}
\end{align}
Next, notice that whenever $\mathbf{x}(t)$ is in the agreement set $\mathcal{A}$, due to \eqref{eq:A=yinRnNy=y=y} and \eqref{eq:sumn2fxpxxff=0}, $\dot{V}(\mathbf{x}(t))$ in \eqref{eq:Vd=sumxn2fxpxxff} must vanish. However, whenever $\mathbf{x}(t)\notin\mathcal{A}$, there is no such restriction. Hence, any time $\mathbf{x}(t)\notin\mathcal{A}$, $\dot{V}(\mathbf{x}(t))$ can be made negative by forcing the $\varphi_i$'s to also satisfy
\begin{align}
\sum_{i\in\mathcal{V}}x_i^T\nabla^2f_i(x_i)\varphi_i(x_i,\mathbf{x}_{\mathcal{N}_i};f_i,\mathbf{f}_{\mathcal{N}_i})<0,\quad\forall\mathbf{x}\in\mathbb{R}^{nN}-\mathcal{A}.\label{eq:sumxn2fxpxxff<0}
\end{align}
With this additional, second condition \eqref{eq:sumxn2fxpxxff<0}, no matter what $x^*$ is, $\dot{V}(\mathbf{x}(t))\le0$ along $\mathbf{x}(t)$, with equality if and only if $\mathbf{x}(t)\in\mathcal{A}$. Finally, note that \eqref{eq:xd=pxxff} and \eqref{eq:sumn2fxpxxff=0} imply
\begin{align*}
\frac{d}{dt}\sum_{i\in\mathcal{V}}\nabla f_i(x_i(t))=\sum_{i\in\mathcal{V}}\nabla^2f_i(x_i(t))\dot{x}_i(t)=0,\quad\forall t\ge0,
\end{align*}
while \eqref{eq:limx=x*}, the continuity of each $\nabla f_i$, and Proposition~\ref{pro:exisuniqx*} imply
\begin{align*}
\lim_{t\rightarrow\infty}\sum_{i\in\mathcal{V}}\nabla f_i(x_i(t))=\sum_{i\in\mathcal{V}}\nabla f_i(\lim_{t\rightarrow\infty}x_i(t))=\sum_{i\in\mathcal{V}}\nabla f_i(x^*)=\nabla F(x^*)=0.
\end{align*}
The former says that by making the $\varphi_i$'s satisfy \eqref{eq:sumn2fxpxxff=0}, the gradient sum $\sum_{i\in\mathcal{V}}\nabla f_i(x_i(t))$ along $\mathbf{x}(t)$ would remain constant over time, while the latter says that to achieve $\lim_{t\rightarrow\infty}V(\mathbf{x}(t))=0$ or equivalently \eqref{eq:limx=x*}, this constant sum must be zero, i.e., $\sum_{i\in\mathcal{V}}\nabla f_i(x_i(t))=0$ $\forall t\ge0$. Therefore, in view of \eqref{eq:x0=cff}, the $\chi_i$'s must be such that
\begin{align}
\sum_{i\in\mathcal{V}}\nabla f_i(\chi_i(f_i,\mathbf{f}_{\mathcal{N}_i}))=0,\label{eq:sumnfcff=0}
\end{align}
yielding the third and final condition.

By imposing algebraic constraints on the $\varphi_i$'s and $\chi_i$'s, conditions \eqref{eq:sumn2fxpxxff=0}, \eqref{eq:sumxn2fxpxxff<0}, and \eqref{eq:sumnfcff=0} characterize a family of algorithms. This family of algorithms share a number of properties, including one that has a nice geometric interpretation: observe from \eqref{eq:sumnfcff=0}, \eqref{eq:x0=cff}, and \eqref{eq:M=yinRnNsumnfy=0} that $\mathbf{x}(0)\in\mathcal{M}$ and further from \eqref{eq:xd=pxxff} and \eqref{eq:sumn2fxpxxff=0} that $\mathbf{x}(t)\in\mathcal{M}$ $\forall t>0$. Thus, every algorithm in the family produces a nonlinear networked dynamical system, whose trajectory $\mathbf{x}(t)$ begins on, and slides along, the zero-gradient-sum manifold $\mathcal{M}$, making $\mathcal{M}$ a positively invariant set. Due to this geometric interpretation, these algorithms are referred to as follows:

\begin{definition}\label{def:ZGS}
A continuous-time distributed algorithm of the form \eqref{eq:xd=pxxff} and \eqref{eq:x0=cff} is said to be a {\em Zero-Gradient-Sum} (ZGS) algorithm if $\varphi_i$ $\forall i\in\mathcal{V}$ are locally Lipschitz and satisfy \eqref{eq:sumn2fxpxxff=0} and \eqref{eq:sumxn2fxpxxff<0}, and $\chi_i$ $\forall i\in\mathcal{V}$ satisfy \eqref{eq:sumnfcff=0}.
\end{definition}

The following theorem lists the properties shared by ZGS algorithms, showing that every one of them is capable of asymptotically driving $\mathbf{x}(t)$ to $\mathbf{x}^*$, solving problem \eqref{eq:minF}:

\begin{theorem}\label{thm:ZGSasymconv}
Consider the network modeled in Section~\ref{sec:probform} and the use of a ZGS algorithm described in Definition~\ref{def:ZGS}. Suppose Assumption~\ref{asm:fi} holds. Then: (i) there exists a unique solution $\mathbf{x}(t)$ $\forall t\ge0$ to \eqref{eq:xd=pxxff} and \eqref{eq:x0=cff}; (ii) $\mathbf{x}(t)\in\mathcal{M}$ $\forall t\ge0$; (iii) $\dot{V}(\mathbf{x}(t))\le0$ $\forall t\ge0$, with equality if and only if $\mathbf{x}(t)=\mathbf{x}^*$; (iv) $\lim_{t\rightarrow\infty}V(\mathbf{x}(t))=0$; and (v) $\lim_{t\rightarrow\infty}\mathbf{x}(t)=\mathbf{x}^*$, i.e., \eqref{eq:limx=x*} holds.
\end{theorem}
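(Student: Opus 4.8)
The plan is to establish the five claims roughly in the order listed, leveraging the Lyapunov structure already built up in the excerpt. For (i), existence and uniqueness on $[0,\infty)$: since each $\varphi_i$ is locally Lipschitz, the Picard--Lindel\"of theorem gives a unique maximal solution on some interval $[0,T_{\max})$; to rule out finite-time blow-up I would use the Lyapunov function $V$ as a coercive, non-increasing quantity. Concretely, once (ii) and (iii) are in hand, $V(\mathbf{x}(t))\le V(\mathbf{x}(0))$ for all $t$ in the interval of existence, and then \eqref{eq:V>=sumt2||x*x||2} forces $\sum_{i}\frac{\theta_i}{2}\|x^*-x_i(t)\|^2\le V(\mathbf{x}(0))$, so $\mathbf{x}(t)$ stays in a fixed compact ball; a standard escape-time argument then extends the solution to all $t\ge0$. (So logically I would prove (ii)--(iii) first and then close (i).)

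Claim (ii) is essentially already argued in the text preceding Definition~\ref{def:ZGS}: from \eqref{eq:sumnfcff=0}, \eqref{eq:x0=cff} and \eqref{eq:M=yinRnNsumnfy=0} we get $\mathbf{x}(0)\in\mathcal{M}$, and differentiating $\sum_{i}\nabla f_i(x_i(t))$ and using \eqref{eq:xd=pxxff} together with \eqref{eq:sumn2fxpxxff=0} shows this sum has zero derivative, hence stays $0$, i.e. $\mathbf{x}(t)\in\mathcal{M}$ for all $t\ge0$. Claim (iii): start from \eqref{eq:Vd=sumxx*n2fxpxxff}, invoke condition \eqref{eq:sumn2fxpxxff=0} to reduce it to \eqref{eq:Vd=sumxn2fxpxxff}, and then apply \eqref{eq:sumxn2fxpxxff<0}: this gives $\dot V(\mathbf{x}(t))<0$ whenever $\mathbf{x}(t)\notin\mathcal{A}$ and $\dot V(\mathbf{x}(t))=0$ whenever $\mathbf{x}(t)\in\mathcal{A}$ (using that on $\mathcal{A}$ all $x_i$ are equal, so the summand becomes a common vector times $\sum_i\nabla^2f_i(x_i)\varphi_i=0$). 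Since $\mathbf{x}(t)\in\mathcal{M}$ by (ii) and $\mathcal{A}\cap\mathcal{M}=\{\mathbf{x}^*\}$, the only point on the trajectory's manifold where $\dot V$ can vanish is $\mathbf{x}^*$, which yields the ``equality iff $\mathbf{x}(t)=\mathbf{x}^*$'' refinement.

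For (iv), I would combine (iii) with the fact that $V(\mathbf{x}(t))$ is non-increasing and bounded below by $0$, so $\lim_{t\to\infty}V(\mathbf{x}(t))$ exists and equals some $c\ge0$; it remains to show $c=0$. Here the natural tool is LaSalle's invariance principle applied to the compact positively invariant set $\Omega=\{\mathbf{x}\in\mathcal{M}: V(\mathbf{x})\le V(\mathbf{x}(0))\}$ (compact by coercivity of $V$, invariant by (ii)--(iii)): the trajectory converges to the largest invariant subset of $\{\mathbf{x}\in\Omega:\dot V(\mathbf{x})=0\}$, which by the refined (iii) is exactly $\{\mathbf{x}^*\}$, so $\mathbf{x}(t)\to\mathbf{x}^*$ and hence $V(\mathbf{x}(t))\to V(\mathbf{x}^*)=0$. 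This actually delivers (v) at the same time, so I would state (v) as an immediate consequence; alternatively one can get (v) from (iv) directly via \eqref{eq:V>=sumt2||x*x||2}, which bounds $\|x^*-x_i(t)\|^2$ by $\frac{2}{\theta_i}V(\mathbf{x}(t))\to0$.

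The main obstacle is the global-in-time existence in (i) and the rigorous invocation of LaSalle in (iv): both hinge on producing a genuinely compact positively invariant set, which is where radial unboundedness of $V$ (via \eqref{eq:V>=sumt2||x*x||2}) and the monotonicity from (iii) must be carefully dovetailed with the local-Lipschitz ODE theory — in particular one must be slightly careful that $\dot V$ in \eqref{eq:Vd=sumxx*n2fxpxxff} is computed along an \emph{a priori} only locally-defined solution, so the boundedness argument and the extension argument have to be run together rather than sequentially. Everything else is direct substitution into conditions \eqref{eq:sumn2fxpxxff=0}, \eqref{eq:sumxn2fxpxxff<0}, \eqref{eq:sumnfcff=0} and the already-noted identity $\mathcal{A}\cap\mathcal{M}=\{\mathbf{x}^*\}$.
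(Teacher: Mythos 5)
Your proposal is correct and follows essentially the same route as the paper's proof: boundedness of the trajectory via the non-increasing, coercive $V$ (using \eqref{eq:V>=sumt2||x*x||2}) to get global existence in (i), the invariance of $\mathcal{M}$ and the identity $\mathcal{A}\cap\mathcal{M}=\{\mathbf{x}^*\}$ for (ii)--(iii), and LaSalle's invariance principle on the same compact positively invariant set $\Omega=\mathcal{M}\cap\{\mathbf{y}:V(\mathbf{y})\le V(\mathbf{x}(0))\}$ for (iv)--(v). Your added caution about running the boundedness and extension arguments together on the maximal interval of existence is a sound (and slightly more careful) rendering of the same argument.
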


\begin{proof}
Since $\varphi_i$ $\forall i\in\mathcal{V}$ are locally Lipschitz, to prove~(i) it suffices to show that every solution $\mathbf{x}(t)$ of \eqref{eq:xd=pxxff} and \eqref{eq:x0=cff} lies entirely in a compact subset of $\mathbb{R}^{nN}$. To this end, let $\mathcal{B}(\mathbf{x}^*,r)\subset\mathbb{R}^{nN}$ denote the closed-ball of radius $r\in[0,\infty)$ centered at $\mathbf{x}^*$, i.e., $\mathcal{B}(\mathbf{x}^*,r)=\{\mathbf{y}\in\mathbb{R}^{nN}:\|\mathbf{y}-\mathbf{x}^*\|\le r\}$. Note from \eqref{eq:Vd=sumxx*n2fxpxxff}, \eqref{eq:sumn2fxpxxff=0}, and \eqref{eq:sumxn2fxpxxff<0} that $\dot{V}(\mathbf{x}(t))\le0$ along $\mathbf{x}(t)$. This, together with \eqref{eq:V>=sumt2||x*x||2}, implies that $V(\mathbf{x}(0))\ge V(\mathbf{x}(t))\ge\frac{\min_{i\in\mathcal{V}}\theta_i}{2}\|\mathbf{x}(t)-\mathbf{x}^*\|^2$ along $\mathbf{x}(t)$. Hence, $\mathbf{x}(t)\in\mathcal{B}(\mathbf{x}^*,\sqrt{\frac{2V(\mathbf{x}(0))}{\min_{i\in\mathcal{V}}\theta_i}})$ $\forall t\ge0$, ensuring~(i). Statement~(ii) has been proven in the paragraph before Definition~\ref{def:ZGS}. To verify~(iii), notice again from \eqref{eq:Vd=sumxx*n2fxpxxff}, \eqref{eq:sumn2fxpxxff=0}, and \eqref{eq:sumxn2fxpxxff<0} that $\dot{V}(\mathbf{x}(t))=0$ if and only if $\mathbf{x}(t)\in\mathcal{A}$. Due to~(ii) and to $\mathcal{A}\cap\mathcal{M}=\{\mathbf{x}^*\}$ shown earlier, (iii) holds. To prove~(iv) and~(v), we will apply LaSalle's invariance principle from Theorem~4.4 in \cite{Khalil02} to the dynamics \eqref{eq:xd=pxxff}. Let $\Omega=\mathcal{M}\cap\{\mathbf{y}\in\mathbb{R}^{nN}:V(\mathbf{y})\le V(\mathbf{x}(0))\}$. Notice that $\Omega$ is compact since $\mathcal{M}$ is closed and $V$ in \eqref{eq:V=sumfx*fxnfxx*x} is continuous and satisfies \eqref{eq:V>=sumt2||x*x||2}. Also note from \eqref{eq:M=yinRnNsumnfy=0}, \eqref{eq:xd=pxxff}, and \eqref{eq:sumn2fxpxxff=0} that $\mathcal{M}$ is positively invariant, and from \eqref{eq:Vd=sumxx*n2fxpxxff}, \eqref{eq:sumn2fxpxxff=0}, \eqref{eq:sumxn2fxpxxff<0}, $\mathcal{A}\cap\mathcal{M}=\{\mathbf{x}^*\}$, and $\mathbf{x}^*\in\Omega\subset\mathcal{M}$ that $\dot{V}(\mathbf{x})\le0$ $\forall\mathbf{x}\in\Omega$, with equality if and only if $\mathbf{x}=\mathbf{x}^*$. Thus, $\Omega$ is positively invariant as well. Moreover, the largest invariant set in $\{\mathbf{y}\in\Omega:\dot{V}(\mathbf{y})=0\}=\{\mathbf{x}^*\}$ is $\{\mathbf{x}^*\}$, since it must be nonempty. It follows from Theorem~4.4 in \cite{Khalil02} that every solution starting in $\Omega$ approaches $\mathbf{x}^*$ as $t\rightarrow\infty$, including $\mathbf{x}(t)$. Therefore, (v) holds and, by the continuity of $V$, (iv) follows.
\end{proof}

Having established Theorem~\ref{thm:ZGSasymconv}, we now present a systematic way to construct ZGS algorithms. First, to find $\chi_i$'s that meet condition \eqref{eq:sumnfcff=0}, consider the following proposition, which shows that each $f_i$ has a unique minimizer $x_i^*\in\mathbb{R}^n$:

\begin{proposition}\label{pro:exisuniqxi*}
With Assumption~\ref{asm:fi}, for each $i\in\mathcal{V}$, there exists a unique $x_i^*\in\mathbb{R}^n$ such that $f_i(x_i^*)\le f_i(x)$ $\forall x\in\mathbb{R}^n$ and $\nabla f_i(x_i^*)=0$.
\end{proposition}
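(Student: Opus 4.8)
The plan is to recognize that Proposition~\ref{pro:exisuniqxi*} is nothing more than the single-node specialization of Proposition~\ref{pro:exisuniqx*}: by Assumption~\ref{asm:fi}, each $f_i$ is twice continuously differentiable and strongly convex with convexity parameter $\theta_i>0$, which are precisely the properties that made $F=\sum_{i\in\mathcal{V}}f_i$ admit a unique minimizer in Proposition~\ref{pro:exisuniqx*}. Hence the quickest route is to invoke Proposition~\ref{pro:exisuniqx*} (equivalently, Theorem~6 in \cite{Bazaraa73}) with $\mathcal{V}$ replaced by a one-element node set carrying objective $f_i$, which directly delivers the existence and uniqueness of $x_i^*$ along with $\nabla f_i(x_i^*)=0$.

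For a self-contained argument I would proceed in three short steps. \emph{Existence:} fix any $x\in\mathbb{R}^n$ and apply the first-order strong convexity bound \eqref{eq:fyfxnfxyx>=t2||yx||2} with $D=\mathbb{R}^n$ to obtain $f_i(y)\ge f_i(x)+\nabla f_i(x)^T(y-x)+\frac{\theta_i}{2}\|y-x\|^2$ for all $y\in\mathbb{R}^n$; the right-hand side grows without bound as $\|y\|\to\infty$, so $f_i$ is radially unbounded, its sublevel sets are bounded and (by continuity) closed, and therefore $f_i$ attains a global minimum at some $x_i^*$. \emph{Stationarity:} since $f_i$ is differentiable and the minimization is unconstrained, the first-order necessary condition gives $\nabla f_i(x_i^*)=0$. \emph{Uniqueness:} if $y$ also satisfies $\nabla f_i(y)=0$, then the monotonicity estimate \eqref{eq:nfynfxyx>=t||yx||2} yields $0=(\nabla f_i(y)-\nabla f_i(x_i^*))^T(y-x_i^*)\ge\theta_i\|y-x_i^*\|^2$, forcing $y=x_i^*$; the same inequality also shows that any stationary point coincides with the global minimizer.

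There is essentially no hard step here, since the claim is just Proposition~\ref{pro:exisuniqx*} applied to one function rather than to a sum. The only point requiring mild care is the local-versus-global distinction in the strong convexity definition: Assumption~\ref{asm:fi} asserts strong convexity with convexity parameter $\theta_i$, i.e.\ the inequalities \eqref{eq:fyfxnfxyx>=t2||yx||2}--\eqref{eq:n2fx>=tI} hold globally on $D=\mathbb{R}^n$, so the coercivity argument above is valid over all of $\mathbb{R}^n$ and not merely on compact subsets.
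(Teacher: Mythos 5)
Your proposal is correct. The paper's own proof of this proposition is a one-line citation to Theorem~6 in \cite{Bazaraa73} (exactly as for Proposition~\ref{pro:exisuniqx*}), so your primary route --- invoking the same result with the single function $f_i$ in place of the sum $F$ --- is precisely what the authors do. Your self-contained argument goes beyond the paper but is sound and standard: coercivity from \eqref{eq:fyfxnfxyx>=t2||yx||2} gives compact sublevel sets and hence existence via Weierstrass, the unconstrained first-order condition gives $\nabla f_i(x_i^*)=0$, and the strong monotonicity bound \eqref{eq:nfynfxyx>=t||yx||2} forces any two stationary points to coincide. Your closing remark on the local-versus-global distinction is also the right thing to check: Assumption~\ref{asm:fi} asserts strong (not merely locally strong) convexity, so \eqref{eq:fyfxnfxyx>=t2||yx||2}--\eqref{eq:n2fx>=tI} hold with $D=\mathbb{R}^n$ and the coercivity argument is global. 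What your explicit argument buys over the paper's citation is transparency about which hypotheses are actually used (only strong convexity and differentiability; twice continuous differentiability and the Lipschitz Hessian are not needed here); what the citation buys is brevity and consistency with the proof of Proposition~\ref{pro:exisuniqx*}.
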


\begin{proof}
See Theorem~6 in \cite{Bazaraa73}.
\end{proof}

Proposition~\ref{pro:exisuniqxi*} implies that $\sum_{i\in\mathcal{V}}\nabla f_i(x_i^*)=0$. Hence, \eqref{eq:sumnfcff=0} can be met by simply letting
\begin{align}
\chi_i(f_i,\mathbf{f}_{\mathcal{N}_i})=x_i^*,\quad\forall i\in\mathcal{V},\label{eq:cff=xi*}
\end{align}
which is permissible since every $x_i^*$ in \eqref{eq:cff=xi*} depends just on $f_i$. It follows that each node $i\in\mathcal{V}$ must solve a ``local'' convex optimization problem $\min_{x\in\mathbb{R}^n}f_i(x)$ for $x_i^*$ before time $t=0$, in order to execute \eqref{eq:x0=cff} and \eqref{eq:cff=xi*}. We note, however, that \eqref{eq:cff=xi*} is sufficient for ensuring \eqref{eq:sumnfcff=0} but not necessary.

Next, to generate locally Lipschitz $\varphi_i$'s that ensure conditions \eqref{eq:sumn2fxpxxff=0} and \eqref{eq:sumxn2fxpxxff<0}, notice that each $\varphi_i$ is premultiplied by $\nabla^2f_i(x_i)$, which is nonsingular $\forall x_i\in\mathbb{R}^n$. Therefore, the impact of each $\nabla^2f_i(x_i)$ can be absorbed by setting
\begin{align}
\varphi_i(x_i,\mathbf{x}_{\mathcal{N}_i};f_i,\mathbf{f}_{\mathcal{N}_i})=(\nabla^2f_i(x_i))^{-1}\phi_i(x_i,\mathbf{x}_{\mathcal{N}_i};f_i,\mathbf{f}_{\mathcal{N}_i}),\quad\forall i\in\mathcal{V},\label{eq:pxxff=n2fxinvpxxff}
\end{align}
where $\phi_i:\mathbb{R}^n\times\mathbb{R}^{n|\mathcal{N}_i|}\rightarrow\mathbb{R}^n$ is a locally Lipschitz function of $x_i$ and $\mathbf{x}_{\mathcal{N}_i}$ maintained by node $i$. For each $i\in\mathcal{V}$, because $\nabla^2f_i$ is locally Lipschitz (due to Assumption~\ref{asm:fi}) and the determinant of $\nabla^2f_i(x_i)$ for every $x_i\in\mathbb{R}^n$ is no less than a positive constant $\theta_i^n$ (due further to \eqref{eq:n2fx>=tI}), the mapping $(\nabla^2f_i(\cdot))^{-1}:\mathbb{R}^n\rightarrow\mathbb{R}^{n\times n}$ in \eqref{eq:pxxff=n2fxinvpxxff} is locally Lipschitz. Thus, as long as the $\phi_i$'s are locally Lipschitz, so would the resulting $\varphi_i$'s, fulfilling the requirement. With \eqref{eq:pxxff=n2fxinvpxxff}, the dynamics \eqref{eq:xd=pxxff} become
\begin{align}
\dot{x}_i(t)=(\nabla^2f_i(x_i(t)))^{-1}\phi_i(x_i(t),\mathbf{x}_{\mathcal{N}_i}(t);f_i,\mathbf{f}_{\mathcal{N}_i}),\quad\forall t\ge0,\;\forall i\in\mathcal{V},\label{eq:xd=n2fxinvpxxff}
\end{align}
and conditions \eqref{eq:sumn2fxpxxff=0} and \eqref{eq:sumxn2fxpxxff<0} simplify to
\begin{align}
\sum_{i\in\mathcal{V}}\phi_i(x_i,\mathbf{x}_{\mathcal{N}_i};f_i,\mathbf{f}_{\mathcal{N}_i})&=0,\quad\forall\mathbf{x}\in\mathbb{R}^{nN},\label{eq:sumpxxff=0}\displaybreak[0]\\
\sum_{i\in\mathcal{V}}x_i^T\phi_i(x_i,\mathbf{x}_{\mathcal{N}_i};f_i,\mathbf{f}_{\mathcal{N}_i})&<0,\quad\forall\mathbf{x}\in\mathbb{R}^{nN}-\mathcal{A}.\label{eq:sumxpxxff<0}
\end{align}

Finally, to come up with locally Lipschitz $\phi_i$'s that assure conditions \eqref{eq:sumpxxff=0} and \eqref{eq:sumxpxxff<0}, suppose each $\phi_i$ is decomposed as
\begin{align}
\phi_i(x_i,\mathbf{x}_{\mathcal{N}_i};f_i,\mathbf{f}_{\mathcal{N}_i})=\sum_{j\in\mathcal{N}_i}\phi_{ij}(x_i,x_j;f_i,f_j),\quad\forall i\in\mathcal{V},\label{eq:pxxff=sumpxxff}
\end{align}
so that the dynamics \eqref{eq:xd=n2fxinvpxxff} become
\begin{align}
\dot{x}_i(t)=(\nabla^2f_i(x_i(t)))^{-1}\sum_{j\in\mathcal{N}_i}\phi_{ij}(x_i(t),x_j(t);f_i,f_j),\quad\forall t\ge0,\;\forall i\in\mathcal{V},\label{eq:xd=n2fxinvsumpxxff}
\end{align}
where $\phi_{ij}:\mathbb{R}^n\times\mathbb{R}^n\rightarrow\mathbb{R}^n$ is a locally Lipschitz function of $x_i$ and $x_j$ maintained by node $i$. Then, \eqref{eq:sumpxxff=0} can be ensured by requiring that every $\phi_{ij}$ and $\phi_{ji}$ pair be negative of each other, i.e.,
\begin{align}
\phi_{ij}(y,z;f_i,f_j)=-\phi_{ji}(z,y;f_j,f_i),\quad\forall i\in\mathcal{V},\;\forall j\in\mathcal{N}_i,\;\forall y,z\in\mathbb{R}^n,\label{eq:pyzff=pzyff}
\end{align}
since $\sum_{i\in\mathcal{V}}\phi_i=\sum_{i\in\mathcal{V}}\sum_{j\in\mathcal{N}_i}\phi_{ij}=\sum_{\{i,j\}\in\mathcal{E}}\phi_{ij}+\phi_{ji}=0$. With \eqref{eq:pxxff=sumpxxff} and \eqref{eq:pyzff=pzyff}, the left-hand side of \eqref{eq:sumxpxxff<0} turns into
\begin{align}
\sum_{i\in\mathcal{V}}x_i^T\phi_i(x_i,\mathbf{x}_{\mathcal{N}_i};f_i,\mathbf{f}_{\mathcal{N}_i})=\frac{1}{2}\sum_{i\in\mathcal{V}}\sum_{j\in\mathcal{N}_i}(x_i-x_j)^T\phi_{ij}(x_i,x_j;f_i,f_j),\quad\forall\mathbf{x}\in\mathbb{R}^{nN}.\label{eq:sumxpxxff=12sumsumxxpxxff}
\end{align}
Because the graph $\mathcal{G}$ is connected, for any $\mathbf{x}\in\mathbb{R}^{nN}-\mathcal{A}$, there exist $i\in\mathcal{V}$ and $j\in\mathcal{N}_i$ such that $x_i-x_j$ in \eqref{eq:sumxpxxff=12sumsumxxpxxff} is nonzero. Hence, \eqref{eq:sumxpxxff<0} can be guaranteed by requiring the $\phi_{ij}$'s to also satisfy
\begin{align}
(y-z)^T\phi_{ij}(y,z;f_i,f_j)<0,\quad\forall i\in\mathcal{V},\;\forall j\in\mathcal{N}_i,\;\forall y,z\in\mathbb{R}^n,\;y\ne z.\label{eq:yzpyzff<0}
\end{align}
Note that if \eqref{eq:pyzff=pzyff} holds, then $\phi_{ij}$ satisfies the inequality in \eqref{eq:yzpyzff<0} if and only if $\phi_{ji}$ does. Therefore, every pair of neighboring nodes $i,j\in\mathcal{V}$ need only minimal coordination before time $t=0$ to realize the dynamics \eqref{eq:xd=n2fxinvsumpxxff}: only one of them, say, node $i$, needs to construct a $\phi_{ij}$ that satisfies the inequality in \eqref{eq:yzpyzff<0}, and the other, i.e., node $j$, only needs to make sure that $\phi_{ji}=-\phi_{ij}$.

Examples~\ref{exa:ZGSelem} and~\ref{exa:ZGSconv} below illustrate two concrete ways to construct $\phi_{ij}$'s that obey \eqref{eq:pyzff=pzyff} and \eqref{eq:yzpyzff<0}:

\begin{example}\label{exa:ZGSelem}
Let $\phi_{ij}(y,z;f_i,f_j)=(\psi_{ij1}(y_1,z_1),\psi_{ij2}(y_2,z_2),\ldots,\psi_{ijn}(y_n,z_n))$ $\forall i\in\mathcal{V}$ $\forall j\in\mathcal{N}_i$ $\forall y=(y_1,y_2,\ldots,y_n)\in\mathbb{R}^n$ $\forall z=(z_1,z_2,\ldots,z_n)\in\mathbb{R}^n$, where each $\psi_{ij\ell}:\mathbb{R}^2\rightarrow\mathbb{R}$ can be {\em any} locally Lipschitz function satisfying $\psi_{ij\ell}(y_\ell,z_\ell)=-\psi_{ji\ell}(z_\ell,y_\ell)$ and $(y_\ell-z_\ell)\psi_{ij\ell}(y_\ell,z_\ell)<0$ $\forall y_\ell\ne z_\ell$ (e.g., $\psi_{ij\ell}(y_\ell,z_\ell)=\tanh(z_\ell-y_\ell)$ or $\psi_{ij\ell}(y_\ell,z_\ell)=-\psi_{ji\ell}(z_\ell,y_\ell)=\frac{z_\ell-y_\ell}{1+y_\ell^2}$). Then, \eqref{eq:pyzff=pzyff} and \eqref{eq:yzpyzff<0} hold.\hfill$\blacksquare$
\end{example}

\begin{example}\label{exa:ZGSconv}
Let $\phi_{ij}(y,z;f_i,f_j)=\nabla g_{\{i,j\}}(z)-\nabla g_{\{i,j\}}(y)$ $\forall i\in\mathcal{V}$ $\forall j\in\mathcal{N}_i$ $\forall y,z\in\mathbb{R}^n$, where each $g_{\{i,j\}}:\mathbb{R}^n\rightarrow\mathbb{R}$ can be {\em any} twice continuously differentiable and locally strongly convex function associated with link $\{i,j\}\in\mathcal{E}$ (e.g., $g_{\{i,j\}}(y)=\frac{1}{2}y^TA_{\{i,j\}}y$ where $A_{\{i,j\}}\in\mathbb{R}^{n\times n}$ is any symmetric positive definite matrix, or $g_{\{i,j\}}(y)=f_i(y)+f_j(y)$ if the nodes do not mind exchanging their $f_i$'s). Then, \eqref{eq:pyzff=pzyff} and \eqref{eq:yzpyzff<0} hold.\hfill$\blacksquare$
\end{example}

Examples~\ref{exa:lineconsalgo} and~\ref{exa:weigaverconsalgo} below show that some of the continuous-time distributed consensus algorithms in the literature are special cases of ZGS algorithms. In addition, they are just a slight modification away from solving general unconstrained, separable, convex optimization problems:

\begin{example}\label{exa:lineconsalgo}
Consider the scalar (i.e., $n=1$) linear consensus algorithm $\dot{x}_i(t)=\sum_{j\in\mathcal{N}_i}a_{ij}(x_j(t)-x_i(t))$ $\forall t\ge0$ $\forall i\in\mathcal{V}$ with symmetric parameters $a_{ij}=a_{ji}>0$ $\forall\{i,j\}\in\mathcal{E}$ and arbitrary initial states $x_i(0)=y_i$ $\forall i\in\mathcal{V}$, studied in \cite{Olfati-Saber04, Hatano05, RenW05, Tahbaz-Salehi07b}. By Definition~\ref{def:ZGS} and Theorem~\ref{thm:ZGSasymconv}, this algorithm is a ZGS algorithm that solves problem \eqref{eq:minF} for $f_i(x)=\frac{1}{2}(x-y_i)^2$ $\forall i\in\mathcal{V}$. Moreover, the algorithm is only a Hessian inverse and an initial condition away (i.e., $\dot{x}_i(t)=(\nabla^2f_i(x_i(t)))^{-1}\sum_{j\in\mathcal{N}_i}a_{ij}(x_j(t)-x_i(t))$ with $x_i(0)=x_i^*$) from solving {\em any} convex optimization problem of the form \eqref{eq:minF} for any $n\ge1$. Note that the same can be said about the scalar nonlinear consensus protocol in \cite{Olfati-Saber03}.\hfill$\blacksquare$
\end{example}

\begin{example}\label{exa:weigaverconsalgo}
Consider the multivariable (i.e., $n\ge1$) weighted-average consensus algorithm $\dot{x}_i(t)=W_i^{-1}\sum_{j\in\mathcal{N}_i}(x_j(t)-x_i(t))$ $\forall t\ge0$ $\forall i\in\mathcal{V}$ with $W_i=W_i^T>0$ and $x_i(0)=y_i$, proposed in \cite{Spanos05c} as a step toward a distributed Kalman filter. This algorithm is a ZGS algorithm that solves problem \eqref{eq:minF} for $f_i(x)=\frac{1}{2}(x-y_i)^TW_i(x-y_i)$ $\forall i\in\mathcal{V}$. Indeed, it is only a replacement of $W_i^{-1}$ by $(\nabla^2f_i(x_i(t)))^{-1}$ and $x_i(0)=y_i$ by $x_i(0)=x_i^*$ away from solving for general $f_i$'s.\hfill$\blacksquare$
\end{example}

\section{Convergence Rate Analysis}\label{sec:convrateanal}

In this section, we derive lower and upper bounds on the exponential convergence rates of the ZGS algorithms described in \eqref{eq:xd=n2fxinvsumpxxff} and Example~\ref{exa:ZGSconv}, i.e.,
\begin{align}
\dot{x}_i(t)=(\nabla^2f_i(x_i(t)))^{-1}\sum_{j\in\mathcal{N}_i}\nabla g_{\{i,j\}}(x_j(t))-\nabla g_{\{i,j\}}(x_i(t)),\quad\forall t\ge0,\;\forall i\in\mathcal{V},\label{eq:xd=n2fxinvsumngxngx}
\end{align}
which form a subset of those in Definition~\ref{def:ZGS}, but include the ones in Examples~\ref{exa:lineconsalgo} and~\ref{exa:weigaverconsalgo} as a subset. To enable the derivation, suppose an initial state $\mathbf{x}(0)\in\mathcal{M}$ is given (e.g., $\mathbf{x}(0)=(x_1^*,x_2^*,\ldots,x_N^*)$ as in \eqref{eq:x0=cff} and \eqref{eq:cff=xi*}). With this $\mathbf{x}(0)$, let $\mathcal{C}_i=\{x\in\mathbb{R}^n:f_i(x^*)-f_i(x)-\nabla f_i(x)^T(x^*-x)\le V(\mathbf{x}(0))\}$ $\forall i\in\mathcal{V}$ and let $\mathcal{C}=\operatorname{conv}\cup_{i\in\mathcal{V}}\mathcal{C}_i$, where $\operatorname{conv}$ denotes the convex hull. It follows from Assumption~\ref{asm:fi}, \eqref{eq:fyfxnfxyx>=t2||yx||2}, \eqref{eq:V=sumfx*fxnfxx*x}, and~(iii) in Theorem~\ref{thm:ZGSasymconv} that $\mathcal{C}_i$ $\forall i\in\mathcal{V}$ are compact, $\mathcal{C}$ is convex and compact, and
\begin{align}
x_i(t),x^*\in\mathcal{C}_i\subset\mathcal{C},\quad\forall t\ge0,\;\forall i\in\mathcal{V}.\label{eq:xx*inCsubsetC}
\end{align}
For each $i\in\mathcal{V}$, due to Assumption~\ref{asm:fi}, \eqref{eq:n2fx>=tI}, and $\mathcal{C}$ being compact, there exists a $\Theta_i\ge\theta_i$ such that
\begin{align}
\nabla^2f_i(x)\le\Theta_iI_n,\quad\forall x\in\mathcal{C}.\label{eq:n2fx<=TI2}
\end{align}
Moreover, for each $\{i,j\}\in\mathcal{E}$, due to \eqref{eq:nfynfxyx>=t||yx||2}, $g_{\{i,j\}}$ being locally strongly convex, and $\mathcal{C}$ being convex and compact, there exists a $\gamma_{\{i,j\}}>0$ such that
\begin{align}
(\nabla g_{\{i,j\}}(y)-\nabla g_{\{i,j\}}(x))^T(y-x)\ge\gamma_{\{i,j\}}\|y-x\|^2,\quad\forall x,y\in\mathcal{C}.\label{eq:ngyngxyx>=g||yx||2}
\end{align}
Furthermore, for each $\{i,j\}\in\mathcal{E}$, due to \eqref{eq:nfynfxyx>=t||yx||2}, \eqref{eq:n2fx>=tI}, \eqref{eq:ngyngxyx>=g||yx||2}, $\nabla^2g_{\{i,j\}}$ being continuous, and $\mathcal{C}$ being convex and compact, there exists a $\Gamma_{\{i,j\}}\ge\gamma_{\{i,j\}}$ such that
\begin{align}
\nabla^2g_{\{i,j\}}(x)\le\Gamma_{\{i,j\}}I_n,\quad\forall x\in\mathcal{C}.\label{eq:n2gx<=GI}
\end{align}
Observe that the constants $\Theta_i$'s, $\gamma_{\{i,j\}}$'s, and $\Gamma_{\{i,j\}}$'s---unlike the convexity parameters $\theta_i$'s---depend on the initial state $\mathbf{x}(0)$ via the sets $\mathcal{C}$ and $\mathcal{C}_i$'s. Thus, the convergence rate results obtained below are dependent on $\mathbf{x}(0)$ in general. One exception is the case where the $f_i$'s and $g_{\{i,j\}}$'s are quadratic functions, for which the $\theta_i$'s, $\Theta_i$'s, $\gamma_{\{i,j\}}$'s, and $\Gamma_{\{i,j\}}$'s may be taken as the smallest and largest eigenvalues of the Hessians of the $f_i$'s and $g_{\{i,j\}}$'s, respectively, independent of $\mathbf{x}(0)$. Finally, for convenience, let $\theta=\min_{i\in\mathcal{V}}\theta_i$, $\Theta=\max_{i\in\mathcal{V}}\Theta_i$, $\gamma=\min_{\{i,j\}\in\mathcal{E}}\gamma_{\{i,j\}}$, and $\Gamma=\max_{\{i,j\}\in\mathcal{E}}\Gamma_{\{i,j\}}$.

The following theorem establishes the exponential convergence of the ZGS algorithms \eqref{eq:xd=n2fxinvsumngxngx} and provides a lower bound $\rho$ on their convergence rates, that they can do no worse than:

\begin{theorem}\label{thm:ZGSexpconvlb}
Consider the network modeled in Section~\ref{sec:probform} and the use of a ZGS algorithm described in \eqref{eq:xd=n2fxinvsumngxngx}. Suppose Assumption~\ref{asm:fi} holds. Then,
\begin{align}
V(\mathbf{x}(t))&\le V(\mathbf{x}(0))e^{-\rho t},\quad\forall t\ge0,\label{eq:V<=Vert}\displaybreak[0]\\
\sum_{i\in\mathcal{V}}\theta_i\|x_i(t)-x^*\|^2&\le\sum_{i\in\mathcal{V}}\Theta_i\|x_i(0)-x^*\|^2e^{-\rho t},\quad\forall t\ge0,\label{eq:sumt||xx*||2<=sumT||xx*||2ert}
\end{align}
where $\rho=\sup\{\varepsilon\in\mathbb{R}:\varepsilon P\le Q\}>0$, $P=[P_{ij}]\in\mathbb{R}^{N\times N}$ is a positive semidefinite matrix given by
\begin{align}
P_{ij}=\begin{cases}(\frac{1}{2}-\frac{1}{N})\Theta_i+\frac{1}{2N^2}\sum_{\ell\in\mathcal{V}}\Theta_\ell, & \text{if $i=j$},\\ -\frac{\Theta_i+\Theta_j}{2N}+\frac{1}{2N^2}\sum_{\ell\in\mathcal{V}}\Theta_\ell, & \text{otherwise},\end{cases}\label{eq:Pij=T}
\end{align}
and $Q=[Q_{ij}]\in\mathbb{R}^{N\times N}$ is a positive semidefinite matrix given by
\begin{align}
Q_{ij}=\begin{cases}\sum_{\ell\in\mathcal{N}_i}\gamma_{\{i,\ell\}}, & \text{if $i=j$},\\ -\gamma_{\{i,j\}}, & \text{if $\{i,j\}\in\mathcal{E}$},\\ 0, & \text{otherwise}.\end{cases}\label{eq:Qij=g}
\end{align}
\end{theorem}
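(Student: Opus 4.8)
The plan is to bound $\dot V(\mathbf{x}(t))$ from above by $-\rho\,V(\mathbf{x}(t))$ along trajectories in $\mathcal{M}$, and then apply the comparison lemma to obtain \eqref{eq:V<=Vert}; the inequality \eqref{eq:sumt||xx*||2<=sumT||xx*||2ert} will then follow by sandwiching $V$ between the two quadratic forms supplied by \eqref{eq:V>=sumt2||x*x||2} (lower) and, on the compact set $\mathcal{C}$, by \eqref{eq:n2fx<=TI2} together with the second-order Taylor-type bound \eqref{eq:fyfxnfxyx<=T2||yx||2} (upper), so that $\sum_i\theta_i\|x_i(t)-x^*\|^2\le 2V(\mathbf{x}(t))\le 2V(\mathbf{x}(0))e^{-\rho t}\le\sum_i\Theta_i\|x_i(0)-x^*\|^2 e^{-\rho t}$. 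Throughout, all relevant points $x_i(t)$ and $x^*$ lie in $\mathcal{C}$ by \eqref{eq:xx*inCsubsetC}, which is what legitimizes using the constants $\Theta_i,\gamma_{\{i,j\}},\Gamma_{\{i,j\}}$.

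The core of the argument is the differential inequality $\dot V\le-\rho V$. First I would rewrite $\dot V$ along \eqref{eq:xd=n2fxinvsumngxngx}: plugging $\varphi_i=(\nabla^2 f_i(x_i))^{-1}\sum_{j\in\mathcal{N}_i}(\nabla g_{\{i,j\}}(x_j)-\nabla g_{\{i,j\}}(x_i))$ into \eqref{eq:Vd=sumxn2fxpxxff} and using the edge-antisymmetry of $\phi_{ij}$ exactly as in \eqref{eq:sumxpxxff=12sumsumxxpxxff}, one gets
\begin{align*}
\dot V(\mathbf{x}(t))=-\frac{1}{2}\sum_{i\in\mathcal{V}}\sum_{j\in\mathcal{N}_i}(x_i(t)-x_j(t))^T\bigl(\nabla g_{\{i,j\}}(x_i(t))-\nabla g_{\{i,j\}}(x_j(t))\bigr).
\end{align*}
By \eqref{eq:ngyngxyx>=g||yx||2} each summand is at least $\gamma_{\{i,j\}}\|x_i(t)-x_j(t)\|^2$, so $\dot V(\mathbf{x}(t))\le-\frac{1}{2}\sum_{\{i,j\}\in\mathcal{E}}\gamma_{\{i,j\}}\|x_i(t)-x_j(t)\|^2=-\mathbf{e}(t)^T(Q\otimes I_n)\mathbf{e}(t)$ componentwise, where $\mathbf{e}(t)=\mathbf{x}(t)-\mathbf{x}^*$; note the quadratic form in $Q$ equals $\frac12\sum_{\{i,j\}}\gamma_{\{i,j\}}\|x_i-x_j\|^2$ since the $x^*$ terms cancel, which also confirms $Q\ge0$. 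For the upper bound on $V$ itself I would use \eqref{eq:fyfxnfxyx<=T2||yx||2} with $\Theta_i$ on $\mathcal{C}$ to get $V(\mathbf{x}(t))\le\frac12\sum_i\Theta_i\|x_i(t)-x^*\|^2$, but this must be reconciled with the constraint $\mathbf{x}(t)\in\mathcal{M}$, i.e. $\sum_i\nabla f_i(x_i(t))=0$; since also $\sum_i\nabla f_i(x^*)=0$, we have $\sum_i(\nabla f_i(x_i(t))-\nabla f_i(x^*))=0$. The point of $P$ is precisely to encode the tightest quadratic upper bound on $V$ compatible with this linear constraint: writing $\bar e=\frac1N\sum_i e_i$, one has $\sum_i\Theta_i\|e_i-\bar e\|^2+\text{(cross terms)}$, and after projecting out the constrained direction one obtains $V(\mathbf{x}(t))\le\mathbf{e}(t)^T(P\otimes I_n)\mathbf{e}(t)$; the messy but routine linear algebra of computing the entries $P_{ij}$ in \eqref{eq:Pij=T} comes from this projection.

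Combining, on $\mathcal{M}$ we have $V(\mathbf{x}(t))\le\mathbf{e}^T(P\otimes I_n)\mathbf{e}$ and $\dot V(\mathbf{x}(t))\le-\mathbf{e}^T(Q\otimes I_n)\mathbf{e}$, and the definition $\rho=\sup\{\varepsilon:\varepsilon P\le Q\}$ gives $\mathbf{e}^T(Q\otimes I_n)\mathbf{e}\ge\rho\,\mathbf{e}^T(P\otimes I_n)\mathbf{e}\ge\rho\,V(\mathbf{x}(t))$, hence $\dot V\le-\rho V$; the comparison lemma (or directly $\frac{d}{dt}(e^{\rho t}V)\le0$) yields \eqref{eq:V<=Vert}. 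One must also check $\rho>0$: since $\mathcal{G}$ is connected, $Q$ is a weighted graph Laplacian with kernel exactly $\mathrm{span}\{\mathbf{1}\}$, and $P$ — being the Gram matrix of the quadratic form on the orthogonal complement of the constraint, with the all-ones direction killed — also has kernel $\mathrm{span}\{\mathbf{1}\}$, so $P$ and $Q$ are simultaneously positive definite on $\mathbf{1}^\perp$ and $\rho=\min_{v\perp\mathbf{1}} v^TQv/v^TPv>0$. The main obstacle I expect is the second paragraph's projection step: getting the exact closed form \eqref{eq:Pij=T} for $P_{ij}$ requires carefully carrying out the constrained-minimization of the quadratic upper bound (equivalently, a rank-one/Schur-complement manipulation that subtracts the $\mathbf{1}$-direction), and verifying that the resulting $P$ is positive semidefinite with the right kernel; everything else is either a direct appeal to the convexity inequalities already catalogued in Sections~\ref{sec:prel} and~\ref{sec:convrateanal} or a standard Gronwall/comparison argument.
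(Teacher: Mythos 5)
Your overall architecture is the paper's: bound $V(\mathbf{x}(t))\le\mathbf{x}(t)^T(P\otimes I_n)\mathbf{x}(t)$ and $-\dot V(\mathbf{x}(t))\ge\mathbf{x}(t)^T(Q\otimes I_n)\mathbf{x}(t)$, use the simultaneous singularity of $P$ and $Q$ on $\operatorname{span}\{\mathbf{1}\}$ to get $\rho P\le Q$ with $\rho>0$, conclude $\dot V\le-\rho V$, and sandwich $V$ between the $\theta_i$- and $\Theta_i$-weighted quadratics for \eqref{eq:sumt||xx*||2<=sumT||xx*||2ert}. The $\dot V$ computation, the $\rho>0$ argument, and the final sandwich are all fine (modulo a self-cancelling factor-of-two slip: the quadratic form of $Q$ is $\sum_{\{i,j\}\in\mathcal{E}}\gamma_{\{i,j\}}\|x_i-x_j\|^2$, without the extra $\tfrac12$ you wrote, which is exactly what the bound $\dot V\le-\tfrac12\sum_i\sum_{j\in\mathcal{N}_i}\gamma_{\{i,j\}}\|x_i-x_j\|^2$ delivers).

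The genuine gap is the step you yourself flag as the main obstacle: establishing $V(\mathbf{x}(t))\le\sum_i\tfrac{\Theta_i}{2}\|x_i(t)-\tfrac1N\sum_jx_j(t)\|^2$. Your proposed route --- take the naive bound $V\le\sum_i\tfrac{\Theta_i}{2}\|x_i-x^*\|^2$ and then ``project out the constrained direction'' --- does not work. First, the constraint $\sum_i(\nabla f_i(x_i)-\nabla f_i(x^*))=0$ is nonlinear in $e_i=x_i-x^*$ (unless the $f_i$ are quadratic), so it does not define a subspace to project onto. Second, even formally, replacing $x^*$ by the unweighted mean $\bar x$ inside a $\Theta_i$-weighted sum of squares is not norm-decreasing: $\sum_i\tfrac{\Theta_i}{2}\|e_i-c\|^2$ is minimized at the $\Theta$-weighted mean of the $e_i$, not at $\bar e$, and neither $c=\bar e$ nor $c=0$ dominates the other in general. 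So $\mathbf{x}^T(P\otimes I_n)\mathbf{x}$ is not obtainable from the diagonal bound by any projection. The paper instead swaps the reference point \emph{inside} $V$ before applying the $\Theta_i$ bound: setting $\eta(t)=\tfrac1N\sum_jx_j(t)\in\mathcal{C}$, optimality of $x^*$ gives $\sum_if_i(x^*)\le\sum_if_i(\eta(t))$, and the ZGS property $\sum_i\nabla f_i(x_i(t))=0$ makes the linear terms satisfy $\sum_i\nabla f_i(x_i)^T(x^*-x_i)=\sum_i\nabla f_i(x_i)^T(\eta-x_i)$; hence $V(\mathbf{x}(t))\le\sum_if_i(\eta)-f_i(x_i)-\nabla f_i(x_i)^T(\eta-x_i)\le\sum_i\tfrac{\Theta_i}{2}\|x_i-\eta\|^2$, and expanding this last sum gives exactly the entries \eqref{eq:Pij=T}. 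This is a convexity-plus-invariance argument, not linear algebra, and without it the theorem's key inequality is unproved.
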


\begin{proof}
Let $\eta(t)=\frac{1}{N}\sum_{j\in\mathcal{V}}x_j(t)$ $\forall t\ge0$. Due to \eqref{eq:xx*inCsubsetC} and the convexity of $\mathcal{C}$, $\eta(t)\in\mathcal{C}$. Moreover, by Proposition~\ref{pro:exisuniqx*}, $\sum_{i\in\mathcal{V}}f_i(x^*)=F(x^*)\le F(\eta(t))=\sum_{i\in\mathcal{V}}f_i(\eta(t))$. Observe from \eqref{eq:M=yinRnNsumnfy=0} and~(ii) in Theorem~\ref{thm:ZGSasymconv} that $\sum_{i\in\mathcal{V}}\nabla f_i(x_i(t))=0$. Thus, from \eqref{eq:V=sumfx*fxnfxx*x}, $V(\mathbf{x}(t))\le\sum_{i\in\mathcal{V}}f_i(\eta(t))-f_i(x_i(t))-\nabla f_i(x_i(t))^T(\eta(t)-x_i(t))$. It follows from \eqref{eq:fyfxnfxyx<=T2||yx||2}, \eqref{eq:n2fx<=TI}, \eqref{eq:n2fx<=TI2}, \eqref{eq:xx*inCsubsetC}, and \eqref{eq:Pij=T} that
\begin{align}
V(\mathbf{x}(t))\le\sum_{i\in\mathcal{V}}\frac{\Theta_i}{2}\|x_i(t)-\frac{1}{N}\sum_{j\in\mathcal{V}}x_j(t)\|^2=\mathbf{x}(t)^T(P\otimes I_n)\mathbf{x}(t),\quad\forall t\ge0,\label{eq:V<=xPIx}
\end{align}
where $\otimes$ denotes the Kronecker product. Next, using \eqref{eq:xd=n2fxinvsumngxngx}, \eqref{eq:xd=pxxff}, and \eqref{eq:Vd=sumxn2fxpxxff}, we can write
\begin{align}
\dot{V}(\mathbf{x}(t))=-\frac{1}{2}\sum_{i\in\mathcal{V}}\sum_{j\in\mathcal{N}_i}(x_j(t)-x_i(t))^T(\nabla g_{\{i,j\}}(x_j(t))-\nabla g_{\{i,j\}}(x_i(t))),\quad\forall t\ge0.\label{eq:Vd=12sumsumxxngxngx}
\end{align}
Therefore, from \eqref{eq:ngyngxyx>=g||yx||2}, \eqref{eq:xx*inCsubsetC}, and \eqref{eq:Qij=g},
\begin{align}
-\dot{V}(\mathbf{x}(t))\ge\frac{1}{2}\sum_{i\in\mathcal{V}}\sum_{j\in\mathcal{N}_i}\gamma_{\{i,j\}}\|x_j(t)-x_i(t)\|^2=\mathbf{x}(t)^T(Q\otimes I_n)\mathbf{x}(t),\quad\forall t\ge0.\label{eq:Vd>=xQIx}
\end{align}
To relate \eqref{eq:V<=xPIx} and \eqref{eq:Vd>=xQIx}, notice from \eqref{eq:Pij=T} and \eqref{eq:Qij=g} that both $P$ and $Q$ are symmetric with zero row sums. Also, $\forall y=(y_1,y_2,\ldots,y_N)\in\mathbb{R}^N$, $y^TPy=\sum_{i\in\mathcal{V}}\frac{\Theta_i}{2}(y_i-\frac{1}{N}\sum_{j\in\mathcal{V}}y_j)^2\ge0$ and $y^TQy=\frac{1}{2}\sum_{i\in\mathcal{V}}\sum_{j\in\mathcal{N}_i}\gamma_{\{i,j\}}(y_j-y_i)^2\ge0$, where the equalities hold if and only if $y_1=y_2=\cdots=y_N$. Hence, both $P$ and $Q$ are positive semidefinite with $N-1$ positive eigenvalues, one eigenvalue at $0$, and $(\frac{1}{\sqrt{N}},\frac{1}{\sqrt{N}},\ldots,\frac{1}{\sqrt{N}})$ being its corresponding eigenvector. It follows that there exists an orthogonal $W\in\mathbb{R}^{N\times N}$ with the first column being $(\frac{1}{\sqrt{N}},\frac{1}{\sqrt{N}},\ldots,\frac{1}{\sqrt{N}})$, such that $W^TPW=\operatorname{diag}(0,\bar{P})$ and $W^TQW=\operatorname{diag}(0,\bar{Q})$, where $\bar{P},\bar{Q}\in\mathbb{R}^{(N-1)\times(N-1)}$, $\bar{P}=\bar{P}^T>0$, and $\bar{Q}=\bar{Q}^T>0$. Note that $\forall\varepsilon\in\mathbb{R}$, $\varepsilon P\le Q\Leftrightarrow\varepsilon\bar{P}\le\bar{Q}\Leftrightarrow\varepsilon I_{N-1}\le\bar{P}^{-1/2}\bar{Q}\bar{P}^{-1/2}$, where $\bar{P}^{1/2}=(\bar{P}^{1/2})^T>0$ is the square root of $\bar{P}$ via the spectral decomposition, i.e., $\bar{P}=\bar{P}^{1/2}\bar{P}^{1/2}$. Since $\rho=\sup\{\varepsilon\in\mathbb{R}:\varepsilon P\le Q\}$ and $\bar{P}^{-1/2}\bar{Q}\bar{P}^{-1/2}=(\bar{P}^{-1/2}\bar{Q}\bar{P}^{-1/2})^T>0$, $\rho$ is the smallest eigenvalue of $\bar{P}^{-1/2}\bar{Q}\bar{P}^{-1/2}$ which is positive and satisfies $\rho P\le Q$. Therefore, $\rho(P\otimes I_n)\le Q\otimes I_n$. This, along with \eqref{eq:V<=xPIx} and \eqref{eq:Vd>=xQIx}, implies $\rho V(\mathbf{x}(t))\le-\dot{V}(\mathbf{x}(t))$, i.e., \eqref{eq:V<=Vert}. Finally, due to \eqref{eq:fyfxnfxyx>=t2||yx||2}, \eqref{eq:V=sumfx*fxnfxx*x}, \eqref{eq:V<=Vert}, \eqref{eq:n2fx<=TI2}, \eqref{eq:n2fx<=TI}, \eqref{eq:fyfxnfxyx<=T2||yx||2}, and \eqref{eq:xx*inCsubsetC}, $\sum_{i\in\mathcal{V}}\frac{\theta_i}{2}\|x_i(t)-x^*\|^2\le V(\mathbf{x}(t))\le V(\mathbf{x}(0))e^{-\rho t}\le\sum_{i\in\mathcal{V}}\frac{\Theta_i}{2}\|x_i(0)-x^*\|^2e^{-\rho t}$, i.e., \eqref{eq:sumt||xx*||2<=sumT||xx*||2ert} holds.
\end{proof}

The lower bound $\rho$ in Theorem~\ref{thm:ZGSexpconvlb} can be calculated according to its proof: $\rho$ is the smallest eigenvalue of $\bar{P}^{-1/2}\bar{Q}\bar{P}^{-1/2}$. The corollary below gives another lower bound, which is not as tight as $\rho$ but is explicit in the algebraic connectivity $\lambda_2>0$ of the graph $\mathcal{G}$:

\begin{corollary}\label{cor:ZGSexpconvlb}
With the setup of Theorem~\ref{thm:ZGSexpconvlb},
\begin{align}
V(\mathbf{x}(t))&\le V(\mathbf{x}(0))e^{-\frac{2\gamma}{\Theta}\lambda_2t},\quad\forall t\ge0,\label{eq:V<=Ve2gTl2t}\displaybreak[0]\\
\|\mathbf{x}(t)-\mathbf{x}^*\|&\le\sqrt{\frac{\Theta}{\theta}}\|\mathbf{x}(0)-\mathbf{x}^*\|e^{-\frac{\gamma}{\Theta}\lambda_2t},\quad\forall t\ge0.\label{eq:||xx*||<=sqrtTt||xx*||egTl2t}
\end{align}
\end{corollary}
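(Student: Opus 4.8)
The plan is to prove the corollary by bounding the rate $\rho$ of Theorem~\ref{thm:ZGSexpconvlb} from below by $\frac{2\gamma}{\Theta}\lambda_2$; the two displayed estimates then come out of \eqref{eq:V<=Vert} and the final inequality chain in the proof of Theorem~\ref{thm:ZGSexpconvlb} by substitution. First I would record two matrix inequalities read off from the quadratic-form identities already derived there. Writing $\mathbf{1}=(1,1,\ldots,1)\in\mathbb{R}^N$, $\Pi=I_N-\frac{1}{N}\mathbf{1}\mathbf{1}^T$, and letting $L\in\mathbb{R}^{N\times N}$ be the (unweighted) Laplacian of $\mathcal{G}$, the identities $y^TPy=\sum_{i\in\mathcal{V}}\frac{\Theta_i}{2}(y_i-\frac{1}{N}\sum_{j\in\mathcal{V}}y_j)^2$ and $y^TQy=\frac{1}{2}\sum_{i\in\mathcal{V}}\sum_{j\in\mathcal{N}_i}\gamma_{\{i,j\}}(y_j-y_i)^2$ together with $\Theta_i\le\Theta$ and $\gamma_{\{i,j\}}\ge\gamma$ give
\[
P\le\frac{\Theta}{2}\Pi,\qquad Q\ge\gamma L.
\]
Next, since $\mathcal{G}$ is connected, $L$ is symmetric positive semidefinite with $L\mathbf{1}=0$ and smallest nonzero eigenvalue $\lambda_2>0$, so decomposing $\mathbb{R}^N$ into the span of $\mathbf{1}$ and its orthogonal complement (the range of $\Pi$, on which $L\ge\lambda_2I$) yields $L\ge\lambda_2\Pi$.

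Since $\frac{2\gamma}{\Theta}\lambda_2\ge0$, scaling the first bound above and chaining with the others gives
\[
\frac{2\gamma}{\Theta}\lambda_2\,P\le\frac{2\gamma}{\Theta}\lambda_2\cdot\frac{\Theta}{2}\Pi=\gamma\lambda_2\Pi\le\gamma L\le Q,
\]
so $\frac{2\gamma}{\Theta}\lambda_2$ lies in $\{\varepsilon\in\mathbb{R}:\varepsilon P\le Q\}$, whence $\rho\ge\frac{2\gamma}{\Theta}\lambda_2$. As $t\ge0$, \eqref{eq:V<=Vert} then gives $V(\mathbf{x}(t))\le V(\mathbf{x}(0))e^{-\rho t}\le V(\mathbf{x}(0))e^{-\frac{2\gamma}{\Theta}\lambda_2t}$, which is \eqref{eq:V<=Ve2gTl2t}. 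For \eqref{eq:||xx*||<=sqrtTt||xx*||egTl2t} I would reuse the chain $\sum_{i\in\mathcal{V}}\frac{\theta_i}{2}\|x_i(t)-x^*\|^2\le V(\mathbf{x}(t))$ and $V(\mathbf{x}(0))\le\sum_{i\in\mathcal{V}}\frac{\Theta_i}{2}\|x_i(0)-x^*\|^2$ from the end of the proof of Theorem~\ref{thm:ZGSexpconvlb}, bound $\theta_i\ge\theta$ and $\Theta_i\le\Theta$, insert \eqref{eq:V<=Ve2gTl2t}, and obtain $\frac{\theta}{2}\|\mathbf{x}(t)-\mathbf{x}^*\|^2\le\frac{\Theta}{2}\|\mathbf{x}(0)-\mathbf{x}^*\|^2e^{-\frac{2\gamma}{\Theta}\lambda_2t}$; taking square roots closes the argument.

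I do not expect any real obstacle: the entire content is the inequality $\rho\ge\frac{2\gamma}{\Theta}\lambda_2$, everything else being a direct substitution into already-proved statements. The only point needing a moment's care is that the scalar $\frac{2\gamma}{\Theta}\lambda_2$ premultiplying $P$ must be nonnegative for $P\le\frac{\Theta}{2}\Pi$ to be scaled into $\frac{2\gamma}{\Theta}\lambda_2P\le\gamma\lambda_2\Pi$, which holds because $\gamma,\Theta,\lambda_2>0$.
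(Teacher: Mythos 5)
Your proposal is correct and follows essentially the same route as the paper: the paper also bounds the quadratic forms by $\Theta$ and $\gamma$ to reduce to Laplacians (its $\frac{1}{N}\mathcal{L}_{\bar{\mathcal{G}}}$ is exactly your projector $\Pi$) and then uses the inequality $\lambda_2\mathcal{L}_{\bar{\mathcal{G}}}\le N\mathcal{L}_{\mathcal{G}}$, which is your $L\ge\lambda_2\Pi$. The only cosmetic difference is that the paper re-derives the differential inequality $\frac{2\gamma}{\Theta}\lambda_2V(\mathbf{x}(t))\le-\dot{V}(\mathbf{x}(t))$ directly, whereas you route through $\rho\ge\frac{2\gamma}{\Theta}\lambda_2$ and invoke \eqref{eq:V<=Vert}, which has the small bonus of certifying the paper's claim that $\rho$ is the tighter bound.
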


\begin{proof}
From \eqref{eq:V<=xPIx} and \eqref{eq:Vd>=xQIx},
\begin{align}
V(\mathbf{x}(t))&\le\frac{\Theta}{2}\sum_{i\in\mathcal{V}}\|x_i(t)-\frac{1}{N}\sum_{j\in\mathcal{V}}x_j(t)\|^2=\frac{\Theta}{2N}\mathbf{x}(t)^T(\mathcal{L}_{\bar{\mathcal{G}}}\otimes I_n)\mathbf{x}(t),\quad\forall t\ge0,\label{eq:V<=T2NxLGbIx}\displaybreak[0]\\
-\dot{V}(\mathbf{x}(t))&\ge\frac{\gamma}{2}\sum_{i\in\mathcal{V}}\sum_{j\in\mathcal{N}_i}\|x_j(t)-x_i(t)\|^2=\gamma\mathbf{x}(t)^T(\mathcal{L}_{\mathcal{G}}\otimes I_n)\mathbf{x}(t),\quad\forall t\ge0,\label{eq:Vd>=gxLGIx}
\end{align}
where $\mathcal{L}_{\bar{\mathcal{G}}}\in\mathbb{R}^{N\times N}$ is the Laplacian of the complete graph $\bar{\mathcal{G}}$ with vertex set $\mathcal{V}$, and $\mathcal{L}_{\mathcal{G}}\in\mathbb{R}^{N\times N}$ is the Laplacian of $\mathcal{G}$. Obviously, $\mathcal{L}_{\bar{\mathcal{G}}}$ has $N-1$ eigenvalues at $N$, $\mathcal{L}_{\mathcal{G}}$ has $N-1$ positive eigenvalues among which $\lambda_2$ is the smallest, and both $\mathcal{L}_{\bar{\mathcal{G}}}$ and $\mathcal{L}_{\mathcal{G}}$ have one eigenvalue at $0$ with $(\frac{1}{\sqrt{N}},\frac{1}{\sqrt{N}},\ldots,\frac{1}{\sqrt{N}})$ being its eigenvector. Let $W\in\mathbb{R}^{N\times N}$ contain $N$ orthonormal eigenvectors of $\mathcal{L}_{\mathcal{G}}$ in its columns. Then, $W^T\mathcal{L}_{\bar{\mathcal{G}}}W$ and $W^T\mathcal{L}_{\mathcal{G}}W$ are diagonal matrices similar to $\mathcal{L}_{\bar{\mathcal{G}}}$ and $\mathcal{L}_{\mathcal{G}}$, and both contain the eigenvalue $0$ in the same diagonal position. Hence, $\lambda_2W^T\mathcal{L}_{\bar{\mathcal{G}}}W\le NW^T\mathcal{L}_{\mathcal{G}}W$, so that $\lambda_2\mathcal{L}_{\bar{\mathcal{G}}}\le N\mathcal{L}_{\mathcal{G}}$. Applying this inequality to \eqref{eq:V<=T2NxLGbIx} and \eqref{eq:Vd>=gxLGIx}, we get $\frac{2\gamma}{\Theta}\lambda_2V(\mathbf{x}(t))\le-\dot{V}(\mathbf{x}(t))$, i.e., \eqref{eq:V<=Ve2gTl2t}. Finally, \eqref{eq:||xx*||<=sqrtTt||xx*||egTl2t} follows from \eqref{eq:V<=Ve2gTl2t} the same way \eqref{eq:sumt||xx*||2<=sumT||xx*||2ert} does from \eqref{eq:V<=Vert}.
\end{proof}

Notice that in the special case where $n=1$, $f_i(x)=\frac{1}{2}(x-x_i^*)^2$ $\forall i\in\mathcal{V}$, and $g_{\{i,j\}}(x)=\frac{1}{2}x^2$ $\forall\{i,j\}\in\mathcal{E}$, we may let the $\theta_i$'s, $\Theta_i$'s, and $\gamma_{\{i,j\}}$'s all be $1$. In this case, Theorem~\ref{thm:ZGSexpconvlb} and Corollary~\ref{cor:ZGSexpconvlb} both yield $\|\mathbf{x}(t)-\mathbf{x}^*\|\le\|\mathbf{x}(0)-\mathbf{x}^*\|e^{-\lambda_2t}$ $\forall t\ge0$, which coincides with the well-known convergence rate result for the linear consensus algorithm $\dot{x}_i(t)=\sum_{j\in\mathcal{N}_i}x_j(t)-x_i(t)$ $\forall t\ge0$ $\forall i\in\mathcal{V}$, reported in \cite{Olfati-Saber04}. Hence, Theorem~\ref{thm:ZGSexpconvlb} and Corollary~\ref{cor:ZGSexpconvlb} may be regarded as a generalization of such a result for distributed consensus, to distributed convex optimization.

The next theorem looks at the performance of the ZGS algorithms \eqref{eq:xd=n2fxinvsumngxngx} from the other end, providing an upper bound $\tilde{\rho}$ on their exponential convergence rates that mirrors Theorem~\ref{thm:ZGSexpconvlb}:

\begin{theorem}\label{thm:ZGSexpconvub}
Consider the network modeled in Section~\ref{sec:probform} and the use of a ZGS algorithm described in \eqref{eq:xd=n2fxinvsumngxngx}. Suppose Assumption~\ref{asm:fi} holds. Then,
\begin{align}
V(\mathbf{x}(t))&\ge V(\mathbf{x}(0))e^{-\tilde{\rho}t},\quad\forall t\ge0,\label{eq:V>=Vertt}\displaybreak[0]\\
\sum_{i\in\mathcal{V}}\Theta_i\|x_i(t)-x^*\|^2&\ge\sum_{i\in\mathcal{V}}\theta_i\|x_i(0)-x^*\|^2e^{-\tilde{\rho}t},\quad\forall t\ge0,\label{eq:sumT||xx*||2>=sumt||xx*||2ertt}
\end{align}
where $\tilde{\rho}=\inf\{\varepsilon\in\mathbb{R}:\varepsilon\tilde{P}\ge\tilde{Q}\}>0$, $\tilde{P}\in\mathbb{R}^{N\times N}$ is a positive definite matrix given by $\tilde{P}=\operatorname{diag}(\frac{\theta_1}{2},\frac{\theta_2}{2},\ldots,\frac{\theta_N}{2})$, and $\tilde{Q}=[\tilde{Q}_{ij}]\in\mathbb{R}^{N\times N}$ is a positive semidefinite matrix given by
\begin{align}
\tilde{Q}_{ij}=\begin{cases}\sum_{\ell\in\mathcal{N}_i}\Gamma_{\{i,\ell\}}, & \text{if $i=j$},\\ -\Gamma_{\{i,j\}}, & \text{if $\{i,j\}\in\mathcal{E}$},\\ 0, & \text{otherwise}.\end{cases}\label{eq:Qtij=G}
\end{align}
\end{theorem}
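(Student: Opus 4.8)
The plan is to mirror the proof of Theorem~\ref{thm:ZGSexpconvlb}, but with every inequality reversed and the roles of the curvature bounds swapped. The first step is to obtain a lower bound on $V(\mathbf{x}(t))$ that is quadratic in $\mathbf{x}(t)$. Starting from \eqref{eq:V=sumfx*fxnfxx*x} and using the fact that $\sum_{i\in\mathcal{V}}\nabla f_i(x_i(t))=0$ (statement~(ii) of Theorem~\ref{thm:ZGSasymconv}), I would write $V(\mathbf{x}(t))=\sum_{i\in\mathcal{V}}f_i(x^*)-f_i(x_i(t))-\nabla f_i(x_i(t))^T(x^*-x_i(t))$ and then invoke the strong-convexity lower bound \eqref{eq:fyfxnfxyx>=t2||yx||2} together with $x_i(t),x^*\in\mathcal{C}$ from \eqref{eq:xx*inCsubsetC}. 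This should give $V(\mathbf{x}(t))\ge\sum_{i\in\mathcal{V}}\frac{\theta_i}{2}\|x_i(t)-x^*\|^2$. Unlike the upper-bound case, here I cannot replace $x^*$ by the centroid $\eta(t)$; instead I simply keep $x^*$, and the quadratic form is $(\mathbf{x}(t)-\mathbf{x}^*)^T(\tilde{P}\otimes I_n)(\mathbf{x}(t)-\mathbf{x}^*)$ with $\tilde{P}=\operatorname{diag}(\tfrac{\theta_1}{2},\ldots,\tfrac{\theta_N}{2})$, which is manifestly positive definite.

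The second step is to upper bound $-\dot{V}(\mathbf{x}(t))$ by a quadratic form. From \eqref{eq:Vd=12sumsumxxngxngx} we have $-\dot{V}(\mathbf{x}(t))=\frac{1}{2}\sum_{i\in\mathcal{V}}\sum_{j\in\mathcal{N}_i}(x_j(t)-x_i(t))^T(\nabla g_{\{i,j\}}(x_j(t))-\nabla g_{\{i,j\}}(x_i(t)))$. Applying the Lipschitz-gradient (equivalently, Hessian upper bound) condition \eqref{eq:nfynfxyx<=T||yx||2}/\eqref{eq:n2gx<=GI} with constants $\Gamma_{\{i,j\}}$, valid on the compact convex set $\mathcal{C}$ by \eqref{eq:xx*inCsubsetC}, yields $-\dot{V}(\mathbf{x}(t))\le\frac{1}{2}\sum_{i\in\mathcal{V}}\sum_{j\in\mathcal{N}_i}\Gamma_{\{i,j\}}\|x_j(t)-x_i(t)\|^2=\mathbf{x}(t)^T(\tilde{Q}\otimes I_n)\mathbf{x}(t)$, with $\tilde{Q}$ the weighted Laplacian in \eqref{eq:Qtij=G}. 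Since $\tilde{Q}$ has zero row sums, $\mathbf{x}(t)^T(\tilde{Q}\otimes I_n)\mathbf{x}(t)=(\mathbf{x}(t)-\mathbf{x}^*)^T(\tilde{Q}\otimes I_n)(\mathbf{x}(t)-\mathbf{x}^*)$, so both quadratic forms are centered at $\mathbf{x}^*$.

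The third step is the spectral comparison. I need $\tilde{\rho}$ such that $\tilde{\rho}(\tilde{P}\otimes I_n)\ge\tilde{Q}\otimes I_n$, equivalently $\tilde{\rho}\tilde{P}\ge\tilde{Q}$. Since $\tilde{P}$ is positive definite, this is equivalent to $\tilde{\rho}I_N\ge\tilde{P}^{-1/2}\tilde{Q}\tilde{P}^{-1/2}$, so the smallest valid $\tilde{\rho}$ is the largest eigenvalue of $\tilde{P}^{-1/2}\tilde{Q}\tilde{P}^{-1/2}$, which matches the definition $\tilde{\rho}=\inf\{\varepsilon:\varepsilon\tilde{P}\ge\tilde{Q}\}$; it is positive because $\tilde{Q}$ is nonzero positive semidefinite. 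Combining: $-\dot{V}(\mathbf{x}(t))\le(\mathbf{x}(t)-\mathbf{x}^*)^T(\tilde{Q}\otimes I_n)(\mathbf{x}(t)-\mathbf{x}^*)\le\tilde{\rho}(\mathbf{x}(t)-\mathbf{x}^*)^T(\tilde{P}\otimes I_n)(\mathbf{x}(t)-\mathbf{x}^*)\le\tilde{\rho}V(\mathbf{x}(t))$, where the last step uses the step-one lower bound. Hence $\frac{d}{dt}\bigl(e^{\tilde{\rho}t}V(\mathbf{x}(t))\bigr)=e^{\tilde{\rho}t}(\dot{V}+\tilde{\rho}V)\ge0$, giving \eqref{eq:V>=Vertt}. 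Finally, \eqref{eq:sumT||xx*||2>=sumt||xx*||2ertt} follows by sandwiching: $\sum_{i\in\mathcal{V}}\frac{\Theta_i}{2}\|x_i(t)-x^*\|^2\ge V(\mathbf{x}(t))\ge V(\mathbf{x}(0))e^{-\tilde{\rho}t}\ge\sum_{i\in\mathcal{V}}\frac{\theta_i}{2}\|x_i(0)-x^*\|^2e^{-\tilde{\rho}t}$, using the curvature upper bound \eqref{eq:fyfxnfxyx<=T2||yx||2}/\eqref{eq:n2fx<=TI2} on $V(\mathbf{x}(t))$ and the lower bound \eqref{eq:fyfxnfxyx>=t2||yx||2} on $V(\mathbf{x}(0))$, together with $\sum_{i\in\mathcal{V}}\nabla f_i(x_i(t))=0$ and $\nabla F(x^*)=0$.

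The main obstacle, such as it is, is a bookkeeping one rather than a conceptual one: making sure the two quadratic forms are both legitimately expressed as centered at $\mathbf{x}^*$ so that the single spectral inequality $\tilde{\rho}\tilde{P}\ge\tilde{Q}$ does the work. For the $\dot{V}$ term this is automatic because $\tilde{Q}$ annihilates the all-ones direction, but for the $V$ lower bound one must be slightly careful that the bound $\sum_i\tfrac{\theta_i}{2}\|x_i(t)-x^*\|^2$ is exactly $(\mathbf{x}(t)-\mathbf{x}^*)^T(\tilde{P}\otimes I_n)(\mathbf{x}(t)-\mathbf{x}^*)$ with no cross terms, which holds since $\tilde{P}$ is diagonal. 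Everything else is a routine application of the curvature bounds on the compact set $\mathcal{C}$ and the Gronwall-type argument already used in Theorem~\ref{thm:ZGSexpconvlb}.
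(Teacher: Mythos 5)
Your proposal is correct and follows essentially the same route as the paper's own proof: lower-bound $V$ by $(\mathbf{x}-\mathbf{x}^*)^T(\tilde{P}\otimes I_n)(\mathbf{x}-\mathbf{x}^*)$ via \eqref{eq:fyfxnfxyx>=t2||yx||2}, upper-bound $-\dot{V}$ by $(\mathbf{x}-\mathbf{x}^*)^T(\tilde{Q}\otimes I_n)(\mathbf{x}-\mathbf{x}^*)$ via \eqref{eq:Vd=12sumsumxxngxngx} and \eqref{eq:n2gx<=GI}, identify $\tilde{\rho}$ as the largest eigenvalue of $\tilde{P}^{-1/2}\tilde{Q}\tilde{P}^{-1/2}$, and conclude by a Gronwall argument and the sandwich with the $\Theta_i$ and $\theta_i$ curvature bounds. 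The only cosmetic difference is your invocation of $\sum_{i\in\mathcal{V}}\nabla f_i(x_i(t))=0$, which is not actually needed for either bound here.
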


\begin{proof}
From \eqref{eq:fyfxnfxyx>=t2||yx||2} and \eqref{eq:V=sumfx*fxnfxx*x}, $V(\mathbf{x}(t))\ge\sum_{i\in\mathcal{V}}\frac{\theta_i}{2}\|x_i(t)-x^*\|^2=(\mathbf{x}(t)-\mathbf{x}^*)^T(\tilde{P}\otimes I_n)(\mathbf{x}(t)-\mathbf{x}^*)$ $\forall t\ge0$. From \eqref{eq:Vd=12sumsumxxngxngx}, \eqref{eq:n2gx<=GI}, \eqref{eq:n2fx<=TI}, \eqref{eq:nfynfxyx<=T||yx||2}, \eqref{eq:xx*inCsubsetC}, and \eqref{eq:Qtij=G}, $-\dot{V}(\mathbf{x}(t))\le\frac{1}{2}\sum_{i\in\mathcal{V}}\sum_{j\in\mathcal{N}_i}\Gamma_{\{i,j\}}\|x_j(t)-x_i(t)\|^2=\frac{1}{2}\sum_{i\in\mathcal{V}}\sum_{j\in\mathcal{N}_i}\Gamma_{\{i,j\}}\|(x_j(t)-x^*)-(x_i(t)-x^*)\|^2=(\mathbf{x}(t)-\mathbf{x}^*)^T(\tilde{Q}\otimes I_n)(\mathbf{x}(t)-\mathbf{x}^*)$ $\forall t\ge0$. Like $Q$ in \eqref{eq:Qij=g}, $\tilde{Q}$ in \eqref{eq:Qtij=G} is symmetric positive semidefinite with exactly one eigenvalue at $0$. Thus, so is $\tilde{P}^{-1/2}\tilde{Q}\tilde{P}^{-1/2}$, where $\tilde{P}^{1/2}=\operatorname{diag}(\sqrt{\frac{\theta_1}{2}},\sqrt{\frac{\theta_2}{2}},\ldots,\sqrt{\frac{\theta_N}{2}})$ is the square root of $\tilde{P}$. Since $\tilde{\rho}=\inf\{\varepsilon\in\mathbb{R}:\varepsilon\tilde{P}\ge\tilde{Q}\}$ and $\forall\varepsilon\in\mathbb{R}$, $\varepsilon\tilde{P}\ge\tilde{Q}\Leftrightarrow\varepsilon I_N\ge\tilde{P}^{-1/2}\tilde{Q}\tilde{P}^{-1/2}$, $\tilde{\rho}$ is the largest eigenvalue of $\tilde{P}^{-1/2}\tilde{Q}\tilde{P}^{-1/2}$ which is positive and such that $\tilde{\rho}\tilde{P}\ge\tilde{Q}$. Therefore, $\tilde{\rho}V(\mathbf{x}(t))\ge-\dot{V}(\mathbf{x}(t))$, proving \eqref{eq:V>=Vertt}. Finally, from \eqref{eq:V=sumfx*fxnfxx*x}, \eqref{eq:n2fx<=TI2}, \eqref{eq:n2fx<=TI}, \eqref{eq:fyfxnfxyx<=T2||yx||2}, \eqref{eq:xx*inCsubsetC}, \eqref{eq:V>=Vertt}, and \eqref{eq:fyfxnfxyx>=t2||yx||2}, we get \eqref{eq:sumT||xx*||2>=sumt||xx*||2ertt}.
\end{proof}

In contrast to $\rho$, the upper bound $\tilde{\rho}$ in Theorem~\ref{thm:ZGSexpconvub} is the largest eigenvalue of $\tilde{P}^{-1/2}\tilde{Q}\tilde{P}^{-1/2}$. The next corollary is to Theorem~\ref{thm:ZGSexpconvub} as Corollary~\ref{cor:ZGSexpconvlb} is to Theorem~\ref{thm:ZGSexpconvlb}, giving another upper bound that is not as tight as $\tilde{\rho}$ but is explicit in the spectral radius $\lambda_N>0$ of the graph Laplacian $\mathcal{L}_{\mathcal{G}}$:

\begin{corollary}\label{cor:ZGSexpconvub}
With the setup of Theorem~\ref{thm:ZGSexpconvub},
\begin{align}
V(\mathbf{x}(t))&\ge V(\mathbf{x}(0))e^{-\frac{2\Gamma}{\theta}\lambda_Nt},\quad\forall t\ge0,\label{eq:V>=Ve2GtlNt}\displaybreak[0]\\
\|\mathbf{x}(t)-\mathbf{x}^*\|&\ge\sqrt{\frac{\theta}{\Theta}}\|\mathbf{x}(0)-\mathbf{x}^*\|e^{-\frac{\Gamma}{\theta}\lambda_Nt},\quad\forall t\ge0.\label{eq:||xx*||>=sqrttT||xx*||eGtlNt}
\end{align}
\end{corollary}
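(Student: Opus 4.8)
The plan is to mirror exactly the structure of the proof of Corollary~\ref{cor:ZGSexpconvlb}, but with the inequalities reversed and the roles of the bounding constants swapped. The starting point is the two differential-inequality ingredients already available from the proof of Theorem~\ref{thm:ZGSexpconvub}: on one hand $V(\mathbf{x}(t))\ge\sum_{i\in\mathcal{V}}\frac{\theta_i}{2}\|x_i(t)-x^*\|^2$, and on the other hand $-\dot{V}(\mathbf{x}(t))\le\frac{1}{2}\sum_{i\in\mathcal{V}}\sum_{j\in\mathcal{N}_i}\Gamma_{\{i,j\}}\|x_j(t)-x_i(t)\|^2$. First I would coarsen both of these using the uniform constants $\theta=\min_i\theta_i$ and $\Gamma=\max_{\{i,j\}}\Gamma_{\{i,j\}}$, obtaining $V(\mathbf{x}(t))\ge\frac{\theta}{2}\|\mathbf{x}(t)-\mathbf{x}^*\|^2$ and, via the identity $\sum_{i}\sum_{j\in\mathcal{N}_i}\|x_j-x_i\|^2=2\,\mathbf{x}^T(\mathcal{L}_{\mathcal{G}}\otimes I_n)\mathbf{x}$ already invoked in \eqref{eq:Vd>=gxLGIx}, the bound $-\dot{V}(\mathbf{x}(t))\le\Gamma\,\mathbf{x}(t)^T(\mathcal{L}_{\mathcal{G}}\otimes I_n)\mathbf{x}(t)$.

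\medskip
The next step is to control $\mathbf{x}(t)^T(\mathcal{L}_{\mathcal{G}}\otimes I_n)\mathbf{x}(t)$ in terms of $\|\mathbf{x}(t)-\mathbf{x}^*\|^2$. Here I would use that, because of statement~(ii) in Theorem~\ref{thm:ZGSasymconv}, $\mathbf{x}(t)$ need not lie on $\mathcal{A}$, but the Laplacian quadratic form annihilates the agreement direction; since $\mathbf{x}^*\in\mathcal{A}$, we have $(\mathcal{L}_{\mathcal{G}}\otimes I_n)\mathbf{x}^*=0$, so $\mathbf{x}(t)^T(\mathcal{L}_{\mathcal{G}}\otimes I_n)\mathbf{x}(t)=(\mathbf{x}(t)-\mathbf{x}^*)^T(\mathcal{L}_{\mathcal{G}}\otimes I_n)(\mathbf{x}(t)-\mathbf{x}^*)\le\lambda_N\|\mathbf{x}(t)-\mathbf{x}^*\|^2$, where $\lambda_N$ is the spectral radius of $\mathcal{L}_{\mathcal{G}}$. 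Combining the three displayed facts gives $-\dot{V}(\mathbf{x}(t))\le\Gamma\lambda_N\|\mathbf{x}(t)-\mathbf{x}^*\|^2\le\frac{2\Gamma\lambda_N}{\theta}V(\mathbf{x}(t))$, i.e. $\dot{V}(\mathbf{x}(t))\ge-\frac{2\Gamma}{\theta}\lambda_N V(\mathbf{x}(t))$. An application of the (Gr\"onwall-type) comparison lemma then yields \eqref{eq:V>=Ve2GtlNt}.

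\medskip
For \eqref{eq:||xx*||>=sqrttT||xx*||eGtlNt}, I would proceed exactly as \eqref{eq:sumt||xx*||2<=sumT||xx*||2ert} and \eqref{eq:||xx*||<=sqrtTt||xx*||egTl2t} are obtained from their respective $V$-bounds, simply reversing the chain: from the quadratic lower bound $V(\mathbf{x}(t))\ge\frac{\theta}{2}\|\mathbf{x}(t)-\mathbf{x}^*\|^2$ above and the companion upper bound $V(\mathbf{x}(0))\le\frac{\Theta}{2}\|\mathbf{x}(0)-\mathbf{x}^*\|^2$ — which follows from \eqref{eq:fyfxnfxyx<=T2||yx||2}, \eqref{eq:n2fx<=TI2}, \eqref{eq:n2fx<=TI}, and \eqref{eq:xx*inCsubsetC} at $t=0$, since $x_i(0),x^*\in\mathcal{C}$ — one gets $\frac{\theta}{2}\|\mathbf{x}(t)-\mathbf{x}^*\|^2\le V(\mathbf{x}(t))$ and $V(\mathbf{x}(0))\le\frac{\Theta}{2}\|\mathbf{x}(0)-\mathbf{x}^*\|^2$, so that \eqref{eq:V>=Ve2GtlNt} rearranges into $\|\mathbf{x}(t)-\mathbf{x}^*\|^2\ge\frac{2}{\theta}V(\mathbf{x}(t))\ge\frac{2}{\theta}V(\mathbf{x}(0))e^{-\frac{2\Gamma}{\theta}\lambda_Nt}\ge\frac{\theta}{\Theta}\cdot\frac{\Theta}{\theta}\cdot\frac{2}{\theta}\cdot\frac{\theta}{2}\|\mathbf{x}(0)-\mathbf{x}^*\|^2 e^{-\frac{2\Gamma}{\theta}\lambda_Nt}=\frac{\theta}{\Theta}\|\mathbf{x}(0)-\mathbf{x}^*\|^2 e^{-\frac{2\Gamma}{\theta}\lambda_Nt}$; taking square roots finishes it.

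\medskip
I do not anticipate any serious obstacle: the whole argument is the ``dual'' of Corollary~\ref{cor:ZGSexpconvlb}, and every piece is already in place. The one point requiring a little care is making sure that all four constants $\theta,\Theta,\gamma,\Gamma$ are used consistently — in particular that the bound $\nabla^2 g_{\{i,j\}}(x)\le\Gamma_{\{i,j\}}I_n$ from \eqref{eq:n2gx<=GI}, valid on the compact convex set $\mathcal{C}$, is what legitimizes passing from \eqref{eq:Vd=12sumsumxxngxngx} to the Laplacian upper bound, and that the reasoning that $x_i(t),x^*\in\mathcal{C}$ (hence the mean-value/convexity arguments apply along the entire trajectory) is exactly \eqref{eq:xx*inCsubsetC}. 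Everything else is the same comparison-lemma bookkeeping already carried out twice in the section.
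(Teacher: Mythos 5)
Your derivation of \eqref{eq:V>=Ve2GtlNt} is correct and is essentially identical to the paper's: both arguments bound $-\dot{V}(\mathbf{x}(t))$ above by $\Gamma(\mathbf{x}(t)-\mathbf{x}^*)^T(\mathcal{L}_{\mathcal{G}}\otimes I_n)(\mathbf{x}(t)-\mathbf{x}^*)\le\Gamma\lambda_N\|\mathbf{x}(t)-\mathbf{x}^*\|^2$ (the paper writes $x_j-x_i=(x_j-x^*)-(x_i-x^*)$ where you invoke $(\mathcal{L}_{\mathcal{G}}\otimes I_n)\mathbf{x}^*=0$; these are the same observation), bound $V(\mathbf{x}(t))$ below by $\frac{\theta}{2}\|\mathbf{x}(t)-\mathbf{x}^*\|^2$, and conclude $\frac{2\Gamma}{\theta}\lambda_NV(\mathbf{x}(t))\ge-\dot{V}(\mathbf{x}(t))$.

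The passage from \eqref{eq:V>=Ve2GtlNt} to \eqref{eq:||xx*||>=sqrttT||xx*||eGtlNt}, however, is wrong as written: you use the two quadratic sandwich bounds in the wrong directions. From $V(\mathbf{x}(t))\ge\frac{\theta}{2}\|\mathbf{x}(t)-\mathbf{x}^*\|^2$ one gets $\|\mathbf{x}(t)-\mathbf{x}^*\|^2\le\frac{2}{\theta}V(\mathbf{x}(t))$, not the inequality $\|\mathbf{x}(t)-\mathbf{x}^*\|^2\ge\frac{2}{\theta}V(\mathbf{x}(t))$ that opens your chain; moreover your product of constants $\frac{\theta}{\Theta}\cdot\frac{\Theta}{\theta}\cdot\frac{2}{\theta}\cdot\frac{\theta}{2}$ equals $1$, not $\frac{\theta}{\Theta}$, so the claimed coefficient does not come out. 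To extract a \emph{lower} bound on $\|\mathbf{x}(t)-\mathbf{x}^*\|$ you need the \emph{upper} quadratic bound on $V$ at time $t$ and the \emph{lower} quadratic bound at time $0$ (the reverse of what you chose), i.e.,
\begin{align*}
\frac{\Theta}{2}\|\mathbf{x}(t)-\mathbf{x}^*\|^2\ge V(\mathbf{x}(t))\ge V(\mathbf{x}(0))e^{-\frac{2\Gamma}{\theta}\lambda_Nt}\ge\frac{\theta}{2}\|\mathbf{x}(0)-\mathbf{x}^*\|^2e^{-\frac{2\Gamma}{\theta}\lambda_Nt},
\end{align*}
where the first inequality uses \eqref{eq:n2fx<=TI2}, \eqref{eq:n2fx<=TI}, \eqref{eq:fyfxnfxyx<=T2||yx||2}, and \eqref{eq:xx*inCsubsetC}, and the last uses \eqref{eq:fyfxnfxyx>=t2||yx||2}; dividing by $\frac{\Theta}{2}$ and taking square roots gives \eqref{eq:||xx*||>=sqrttT||xx*||eGtlNt}. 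This is precisely the roles-reversed analogue of how \eqref{eq:sumt||xx*||2<=sumT||xx*||2ert} follows from \eqref{eq:V<=Vert}: note that in \eqref{eq:sumT||xx*||2>=sumt||xx*||2ertt} the $\Theta_i$ sits with the time-$t$ term and the $\theta_i$ with the time-$0$ term, opposite to \eqref{eq:sumt||xx*||2<=sumT||xx*||2ert}. The fix is routine, but the chain as you wrote it does not establish the claim.
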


\begin{proof}
From the proof of Theorem~\ref{thm:ZGSexpconvub}, $\forall t\ge0$, we have $V(\mathbf{x}(t))\ge\sum_{i\in\mathcal{V}}\frac{\theta}{2}\|x_i(t)-x^*\|^2=\frac{\theta}{2}\|\mathbf{x}(t)-\mathbf{x}^*\|^2$ and $-\dot{V}(\mathbf{x}(t))\le\frac{1}{2}\sum_{i\in\mathcal{V}}\sum_{j\in\mathcal{N}_i}\Gamma\|(x_j(t)-x^*)-(x_i(t)-x^*)\|^2=\Gamma(\mathbf{x}(t)-\mathbf{x}^*)^T(\mathcal{L}_{\mathcal{G}}\otimes I_n)(\mathbf{x}(t)-\mathbf{x}^*)\le\Gamma\lambda_N\|\mathbf{x}(t)-\mathbf{x}^*\|^2$. Consequently, $\frac{2\Gamma}{\theta}\lambda_NV(\mathbf{x}(t))\ge-\dot{V}(\mathbf{x}(t))$, implying that \eqref{eq:V>=Ve2GtlNt} and \eqref{eq:||xx*||>=sqrttT||xx*||eGtlNt} hold.
\end{proof}

Note that for the special case below Corollary~\ref{cor:ZGSexpconvlb}, we may let the $\Gamma_{\{i,j\}}$'s be $1$, so that Theorem~\ref{thm:ZGSexpconvub} and Corollary~\ref{cor:ZGSexpconvub} both lead to $\|\mathbf{x}(t)-\mathbf{x}^*\|\ge\|\mathbf{x}(0)-\mathbf{x}^*\|e^{-\lambda_Nt}$ $\forall t\ge0$, which is again known. Finally, note that the above analysis provides a framework for studying the interplay among network topologies (i.e., $\mathcal{V}$ and $\mathcal{E}$), problem characteristics (i.e., the $f_i$'s, $\theta_i$'s, and $\Theta_i$'s), and ZGS algorithm parameters (i.e., the $g_{\{i,j\}}$'s, $\gamma_{\{i,j\}}$'s, and $\Gamma_{\{i,j\}}$'s), which may be worthy of further research.

\section{Conclusion}\label{sec:concl}

In this paper, using a convexity-based Lyapunov function candidate, we have developed a set of continuous-time ZGS algorithms, which solve a class of distributed convex optimization problems over networks. We have established the asymptotic and exponential convergence of these algorithms and derived lower and upper bounds on their convergence rates. We have also shown that the ZGS algorithms for distributed convex optimization are closely related to the basic algorithms for distributed consensus, suggesting that the former may be extended in a number of directions just like the latter were, in ways that possibly parallel the latter.

\bibliographystyle{IEEEtran}
\bibliography{paper}

\end{document}